\pdfoutput=1
\documentclass[acmsmall]{acmart}


\setcopyright{none}

\bibliographystyle{ACM-Reference-Format}


\usepackage{booktabs}   
\usepackage{subcaption} 

\usepackage{enumerate}

\usepackage{amsthm}
\usepackage{amsmath}
\usepackage{amssymb}

\usepackage{algorithm}
\usepackage{algpseudocode}
\usepackage{multirow}
\usepackage{url}
\usepackage{subcaption}
\usepackage{enumitem}
\usepackage{tabularx}
\usepackage[font=small,labelfont=bf,margin=\parindent,tableposition=top]{caption}

\algnewcommand{\algorithmicand}{\textbf{ and }}
\algnewcommand{\algorithmicor}{\textbf{ or }}
\algnewcommand{\OR}{\algorithmicor}
\algnewcommand{\AND}{\algorithmicand}

\newcommand{\etal}{\textit{et al}.}

\begin{document}

\title{Accelerating Concurrent Heap on GPUs}         


\author{Yanhao Chen}
\affiliation{
  \department{Computer Science}              
  \institution{Rutgers University}            
  \country{USA}                    
}
\email{yc827@cs.rutgers.edu}          

\author{Fei Hua}
\affiliation{
  \department{Computer Science}              
  \institution{Rutgers University}            
  \country{USA}                    
}
\email{huafei90@gmail.com}          

\author{Chaozhang Huang}
\affiliation{
  \department{Computer Science}              
  \institution{Rutgers University}            
  \country{USA}                    
}
\email{ch616@scarletmail.rutgers.edu}          

\author{Jeremy Bierema}
\affiliation{
  \department{Computer Science}              
  \institution{Rutgers University}            
  \country{USA}                    
}
\email{jeremy.bierema@rutgers.edu}          

\author{Chi Zhang}
\affiliation{
  \department{Computer Science}              
  \institution{University of Pittsburgh}            
  \country{USA}                    
}
\email{raymond.chizhang@gmail.com}          

\author{Eddy Z. Zhang}
\affiliation{
  \department{Computer Science}              
  \institution{Rutgers University}            
  \country{USA}                    
}
\email{eddy.zhengzhang@gmail.com}          

\begin{abstract}
Priority queue, often implemented as a heap, is an abstract data type that has been used in many well-known applications like Dijkstra's shortest path algorithm, Prim's minimum spanning tree, Huffman encoding,and the branch-and-bound algorithm. However, it is challenging to exploit the parallelism of the heap on GPUs since the control divergence and memory irregularity must be taken into account. In this paper, we present a parallel generalized heap model that works effectively on GPUs. We also prove the \textit{linearizability} of our generalized heap model which enables us to reason about the expected results. We evaluate our concurrent heap thoroughly and show a maximum 19.49X speedup compared to the sequential CPU implementation and 2.11X speedup compared with the existing GPU implementation \cite{He+:HiPC12}. We also apply our heap to single source shortest path with up to 1.23X speedup and 0/1 knapsack problem with up to 12.19X speedup.
\end{abstract}

\begin{CCSXML}
<ccs2012>
<concept>
<concept_id>10011007.10011006.10011008</concept_id>
<concept_desc>Software and its engineering~General programming languages</concept_desc>
<concept_significance>500</concept_significance>
</concept>
<concept>
<concept_id>10003456.10003457.10003521.10003525</concept_id>
<concept_desc>Social and professional topics~History of programming languages</concept_desc>
<concept_significance>300</concept_significance>
</concept>
</ccs2012>
\end{CCSXML}

\ccsdesc[500]{Software and its engineering~General programming languages}
\ccsdesc[300]{Social and professional topics~History of programming languages}


\maketitle

\section{Introduction}

A priority queue is an abstract data type which assigns each data element a priority and an element of high priority is always served before an element of low priority. A priority queue is dynamically maintained, allowing a mix of insertion and deletion updates. Well-known applications of priority queue include Dijkstra's shortest path algorithm, Prim's minimum spanning tree, Huffman encoding, and the branch-and-bound algorithm that solves many combinatorial optimization problems. 

Understanding how to accelerate priority queue on many-core architecture has profound impacts. A comprehensive study will not only shed light on the performance benefits/limitation of running the priority queue itself on accelerator architecture but also pave the road for future work that parallelizes a large class of applications which build on  priority queues.  

Priority queue is often implemented as \emph{heap}. In this paper, we focus on \emph{heap}.  Heap is a fundamental abstract data type, but has not been extensively studied for its acceleration on GPUs. A heap is a tree data structure. Using \emph{min-heap} as an example, every node in the binary tree has a key that is smaller than or equal to that of its parent. There are two basic operations for heap -- insertion updates and deletion updates. The deletion update always returns the minimal key. The insertion update inserts a key to the right location in the tree. Both operations allow logarithmic complexity and leave the binary tree in a balanced state. An example of min-heap is shown in Fig. \ref{fig:minheapexample}.

\begin{figure}[htb]
  \centering
  \includegraphics[width=0.6\linewidth]{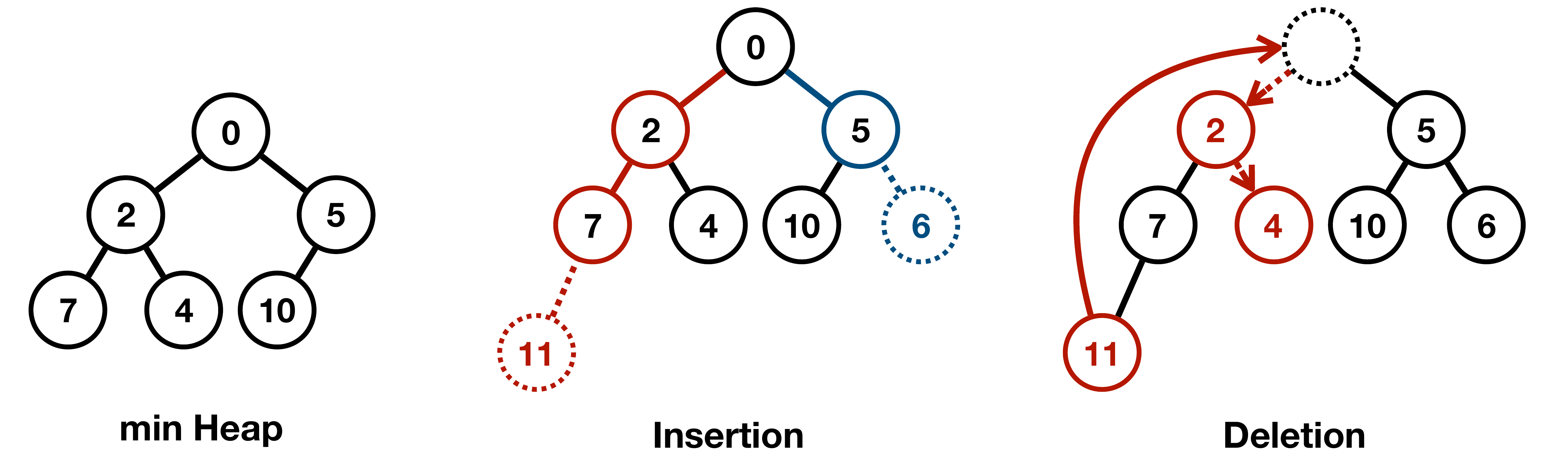}
  \caption{An example of min-heap and irregularity in the heap operations }
  \label{fig:minheapexample}
\end{figure} 

However, it is non-trivial to exploit parallelism of the binary heap, and reason about the correctness given a parallel implementation. There are two key challenges that prevent us from fully taking advantage of the massive parallelism in many-core processors. Each update operation of the binary heap involves a tree walk. Different tree walks exhibit different control flow paths and the memory locality might be low. The control divergence and memory irregularity are two main performance hazards for GPU computing \cite{Zhang+:ASPLOS11}. The two performance hazards need to be tackled before we can efficiently accelerate binary heap on GPUs. Moreover, as the parallel design gets complicated, it is not easy to reason about the correctness properties of the concrete implementation. 


Existing work for parallelizing heap neither provide correctness guarantee nor take the GPU performance hazards into consideration. Among the research for parallelizing heap on CPUs, the work by Rao and Kumar \cite{Nageshwara+:TC88} avoids locking the heap in its entirety, associates each node with a lock, and makes insertion updates top-down, such that the locking order of nodes prevents deadlock. Hunt \etal ~\cite{Hunt+:IPL96} adopts the same fine grained locking mechanism, but makes insertion updates bottom up while maintaining a top-down lock order. The implementation by Hunt \etal ~alleviates the contention at the root node. However, neither of them formally reasons about the correctness of their implementation. Neither tackles the control divergence problem caused by random tree walks as it was not a problem for CPUs at the time when both works were published.

The closest related work to ours is by He and others \cite{He+:HiPC12} which is a GPU implementation of the binary heap. It is based on the idea presented by Deo and Prasad \cite{Deo+:JS92} in 1992, which exploits the parallelism by increasing the
node capacity in the heap. One node may contain $k$ keys ($k \ge $ 1). However, while it exploits intra-node parallelism, inter-node parallelism is not well exploited. It divides the heap into even and odd levels and uses barrier synchronization to make sure operations on two types of levels are never processed at the same time. It assumes all insertion/deletion updates progress at the same rate. Moreover, between every two consecutive barrier synchronization points, only one insertion or deletion request can be accepted, which severely limits the efficiency of its implementation on GPUs. Our implementation is shown to be much faster than the work by He \etal \cite{He+:HiPC12} (in Section 5).

In this work, we present a design of concurrent heap that is well-suited for many-core accelerators. Although our idea is implemented and evaluated on GPUs, it applies to other general purpose accelerator architecture with vector processing units. Further, we not only show that our design outperforms sequential CPU implementation and existing GPU implementation, but also prove that our concurrent heap is linearizable. Specifically, our contribution is summarized as follows: 
 
\noindent \emph{\bf \textit{1. We develop a generalized heap model}.} In our model, each node of the heap may contain multiple keys. This similar to the work by Deo and Prasad  \cite{Deo+:JS92}. However, there are two key differences. First, assuming $k$ is the node capacity, Deo and Prasad \cite{Deo+:JS92} only allow inserting/deleting exactly $k$ keys, while it is not uncommon that an application inserts/deletes less than or more than $k$ keys. We added support for partial insertion and deletion in our generalized model. Second, we exploit both intra-node parallelism and inter-node parallelism, the latter of which is not fully explored by Deo and Prasad \cite{Deo+:JS92} or He \etal ~\cite{He+:HiPC12}. Note that the benefit of having multiple keys in one node is multi-fold. It allows for intra-node parallelism, memory parallelism, local caching, and can alleviate control divergence since in a tree walk $k$ keys in the same node move along the same path. 

\noindent {\bf \textbf{2. We prove the linearizability of our implementation.} }  We propose two types of heap implementations and prove both are linearizable. Linearizability is a strong correctness condition. A history of concurrent invocation and response events is linearizable if and only if some (valid) reordering of events yield a legal sequential history. We provide a model for describing the concurrent and sequential histories and for inserting linearization points. 
As far as we know, existing heap implementations on CPU \cite{Hunt+:IPL96, Nageshwara+:TC88, Deo+:JS92} or GPU \cite{He+:HiPC12} do not have a formal reasoning about their correctness or linearizability condition.

\noindent {\bf \textbf{3. We perform a comprehensive evaluation of the concurrent heap.}} We thoroughly evaluate our heap implementation and provide an enhanced understanding of the interplay between heap parameters and execution efficiency. We perform sensitivity analysis for heap node capacity, partial operation percentage, concurrent thread number, and initial heap utilization. We explore the difference between insertion and deletion performance. We also evaluate our implementation on real workloads, while most previous work use synthetic traces \cite{Hunt+:IPL96, Nageshwara+:TC88, Deo+:JS92, He+:HiPC12}, as far as we know. We show that performance improvement could be up to 19.49 times compared with sequential CPU implementation, 2.11 times compared with the existing GPU implementation. We improve the single source shortest path algorithm by up to 123\% and improve the performance of 0/1 knapsack by up to 1219\%, which demonstrates the great potential of applying priority queue on GPU accelerators.


\section{Background}

\subsection{Heap Data Structure}

A heap data structure can be viewed as a binary tree and each node of the binary tree stores a \textit{key}. Without loss of generation, we use the \emph{min-heap} to describe our idea throughout the paper. The minimal key is stored at the root node. A node's key is smaller than or equal
to parent's key. A heap is maintained using two basic operations: \emph{insert} and \emph{delete-min}. 

During the \emph{insertion} process, a key is inserted to an appropriate location such that the heap property is maintained. In a bottom-up insertion process, it places the key in the first empty leaf node, then repeats the following steps: compare a node's key with its parent node's key, if smaller, then swap itself with the parent node, otherwise, terminate. The \emph{bottom-up insertion} algorithm is shown in Fig. \ref{fig:serialheap} (a).

\begin{figure*}[htp]
\centering
\includegraphics[width=0.7\textwidth]{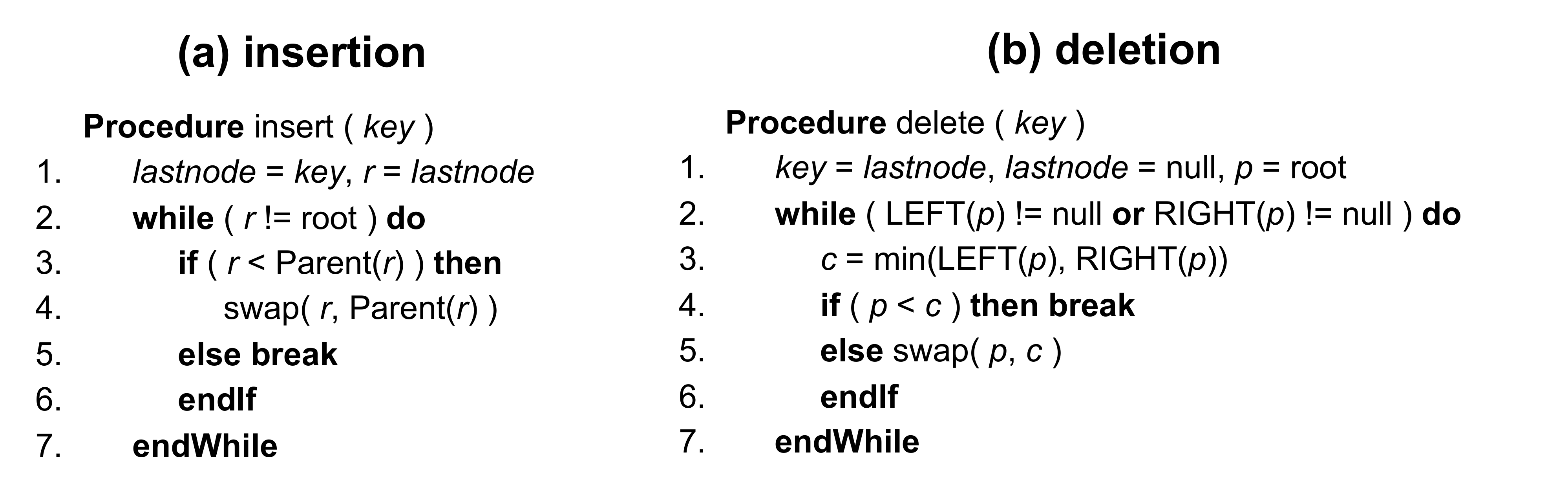}
\caption{Insert and Delete in a sequential Heap}
\label{fig:serialheap}
\end{figure*}


A \emph{delete-min} procedure returns the minimal key in the heap. It removes the key at the root node and starts a ``heapify" process to restore the \emph{min-heap} property. To heapify, it moves the last leaf node to the root, and
repeat the following steps: (1) compare the node p's left child and right child (if there is any), (2) return the smaller of the two as $c$, and (3) if p's key is larger than c's key, swap the node c with the node p, otherwise, terminate. The algorithm is shown in Fig. \ref{fig:serialheap} (b).

\subsection{GPU Architecture}

GPU is a type of many-core accelerator. It employs the single instruction multiple thread (SIMT) execution model. In order to take advantage of GPU, two fundamental factors need to be taken into consideration \cite{Kumar+:ISCA08, Moscovici+:PACT17}:
\emph{control divergence} and \emph{memory locality}.

In SIMT model, threads are organized into groups, each of which executes in lock-step manner. Each group is called a \emph{warp}. The threads in the same warp can only be issued one instruction at one time. If threads in the same warp need to execute different instructions, the execution will be serialized. This is called \emph{{control divergence}}. 

During execution, the data must be fetched for all threads at each instruction. The warp cannot execute until the data operands of all its threads are ready. Memory parallelism needs to be exploited since data in physical memory is organized into large contiguous blocks. Data is fetched in the unit of memory blocks. If one memory references involves non-contiguous data access in multiple blocks, the warp needs to fetch multiple blocks. If one memory reference involves contiguous data access in memory, it will reduce the number of memory blocks that need to be fetched.

The SIMT model provides a limited set of synchronization primitives. Barrier synchronization is allowed for threads within CTA. A GPU \emph{kernel} does not complete until all its threads
have completed, which can be used as an implicit barrier among all threads. Although there is no provided lock intrinsics on GPUs, the atomic compare and swap (CAS) function is provided ,and can be used to implement synchronization primitives.

 \subsection{Linearizability}
 
In concurrent programming, \textit{linearizability} \cite{herlihy1990linearizability} is a strong condition which constrains the possible output of a set of interleaved operations. It is also a safety property that enables us to reason about the expected results from a concurrent system \cite{shavit+:DC16}. The execution of these operations results in a \textit{history}, an ordered sequence of invocation and response events. The \emph{invocation} refers to the event when an operation starts. The \emph{response} refers to the event when operation completes. 

A \textit{sequential history} is the one that an invocation is always followed by a matching response, and a response event followed by another invocation. Alternatively, since in a \emph{sequential} history operations do not overlap, we can consider as if an invocation and its matching response happen at the same time, and an operation takes immediate effect. There is no real concurrency in a \emph{sequential} history. However, a sequential history is easy to reason about. 

We say that the \textit{history} $H$ as an ordered list of invocation and response events \{ ${e_{1}, e_{2}, ..., e_{k}}$\} is \textit{linearizable} if there exists a re-ordering of the events such that (1) a correct sequential history can be generated, and (2) if the response of an operation $e_i$ precedes the invocation of another operation $e_j$, $res(e_i)$  $<$ $inv(e_j)$, then the operation $e_i$ precedes the operation $e_j$ in the reordered events. Typically \emph{linearization point} is used to denote the time when an operation takes immediate effect between the invocation and response of one operation. Finding the right linearization points to construct a correct sequential history naturally meets the condition of (2).

\section{Concurrent Heap Design}

We exploit the parallelism of heap operations by allowing concurrent insert operations, \textit{INS}, and delete operations, \textit{DEL}, on different tree nodes. In the meantime, each node in the binary tree is extended to contain a batch of keys instead of only one. We refer to this proposed heap as the \textit{generalized heap} throughout this paper. Our heap is well-suited for acceleration on GPUs. Parallelism exists within a batch of keys and the control divergence is reduce because all keys in the same batch move along the same path in the tree during tree traversals. 

\subsection{Generalized Heap} \label{sec:design:batchedheap}

Each node in the generalized heap contains $k$ keys. \footnote{we use $k$ to represent the node capacity throughout this paper.} Since the number of keys in the heap is not always a multiple of $k$ and an insertion or deletion may not be exactly $k$ keys, we use a \textit{partial buffer} implementation. The \textit{partial buffer} contains no more than $k-1$ keys 
. All the keys in the partial buffer should be larger than or equal to the keys in the root node so as to make sure the smallest $k$ keys are in the root node. We denote the $k$ keys in one heap node as a \emph{batch}. 

Like conventional heap, after each \textit{INS} and \textit{DEL} update on the generalized heap, the {heap property} needs to be preserved. Here, we formally define the \textbf{generalized heap property}:

\begin{enumerate}[label=\textbf{Property \arabic*}.,itemindent=*]
  \item Given any node $c$ in the generalized heap and its parent node $p$, the smallest key in $c$ is always larger than or equal to the largest key in $p$:
  \[ \min\limits_{i=1..k}{node[c][i]} \geq \max\limits_{j=1..k}{node[p][j]}\]

  \item Given any node $c$ in the generalized heap, the keys in $c$ are sorted in ascending order:
  \[ \forall i \in [1, k):node[c][i] \leq node[c][i+1] \]

  \item Given the partial buffer $b$ with size $s$, all the keys are sorted in ascending order and are larger than or equal to those in the root node $r$:
  \[ \forall i \in [1, s):node[b][i] \leq node[b][i+1] \]
  \[ {node[b][1]} \geq {node[r][k-1]} \]
\end{enumerate}

Note that the heap property for the conventional heap is a special case of the \textbf{generalized heap property} with $k$ = 1. When the batch of each node contains only one key, the generalized heap property 1 and 2 are still satisfied. The generalized heap property 3 does not apply since the partial buffer contains at most $k - 1$ keys so there is no partial buffer in the conventional heap.

The most space efficient way to represent a generalized heap is using the \emph{array}. Each entry of the array represents a single key and consecutive $k$ entries represent a node. Thus, the generalized heap can be represented as a linear array while the first $k$ entries are from the root node and the next $k$ entries are from the second node and so on. Therefore, array entries in the range of $[i*k, (i+1)*k-1]$ are from the $i$-th node in the generalized heap. An array representation allows an implicit binary tree representation of the generalized heap. Fig. \ref{fig:batchedheap} shows an example of the generalized heap in both array representation and binary tree representation. The partial buffer is stored separately. 

\begin{figure}[htb]
  \centering
  \includegraphics[width=0.4\linewidth]{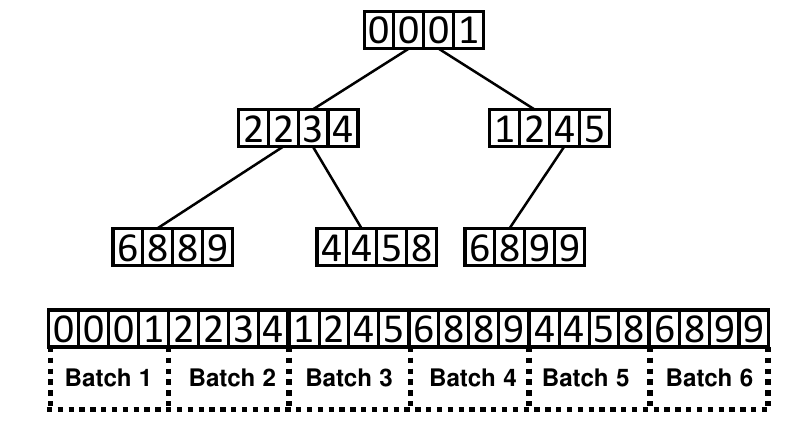}
  \caption{An example of the generalized heap}
  \label{fig:batchedheap}
\end{figure}

\subsection{\textit{INS} and \textit{DEL} operations on the Generalized Heap}

There are two basic operations for heap: \textit{DEL} operation deletes the root node which contains the smallest $k$ keys from the heap and \textit{INS} operation inserts new keys into the heap. We describe these two basic operations on the generalized heap.

\begin{figure}[htb]
  \centering
  \includegraphics[width=1.0\linewidth]{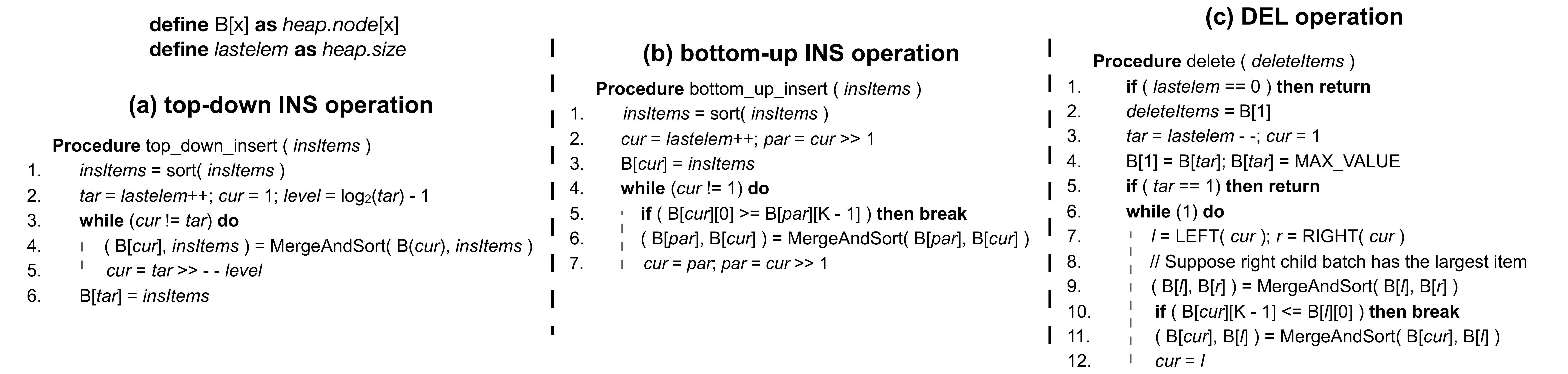}
  \caption{Pseudo Codes for \textit{INS} and \textit{DEL} operations on Generalized Heap}
  \label{fig:seqopr}
\end{figure}

\subsubsection{\textbf{\textit{DEL} Operation}} 

The \textit{DEL} operation on the generalized heap retrieves the $k$ keys from the root node. Since the root node is left empty, the generalized heap needs to be heapified to satisfy the generalized heap property. The pseudo code is shown in Fig. \ref{fig:seqopr}(c).

The \textit{DEL} operation refills the root node with the keys from the last leaf node of the heap (line 1 - 5). Note that we will fill the last node with a MAX value to make sure the old keys in that node are covered. Then, we propagate the new values in root node down. During the propagation, \textit{DEL} operation will perform the MergeAndSort operation on two child nodes $l$ and $r$ (line 9). Here we formally define the \textbf{MergeAndSort} operation between two batch of keys $a$ and $b$:

\[ (batch[hi][1:k], batch[lo][1:k]) = \textbf{MergeAndSort}(batch[a][1:k], batch[b][1:k]) \textbf{ such that}\]
\[ \forall i \in [1,k): batch[hi][i] \leq batch[hi][i+1] \]
\[ \forall i \in [1,k): batch[lo][i] \leq batch[lo][i+1] \]
\[ \max\limits_{i=1..k}{batch[hi][i]} \leq \min\limits_{j=1..k}{batch[lo][j]} \]

MergeAndSort operation returns two batches $hi$ and $lo$ with size $k$. $hi$ stores the $k$ smallest keys and $lo$ stores the $k$ largest keys. 

  The \textit{DEL} operation places the $lo$ part back to the child node whose maximum key was larger (compared with the other child) before (line 9). In this example, let's suppose it is the child node $r$. It can be proved that with such a placement policy, the generalized heap property on the sub-heap of $r$ will be maintained. The $hi$ part is placed into the child node $l$. Then, another MergeAndSort operation is applied to the current node and the child $l$ (line 11). The $k$ smallest of the merged result will stay in the current node, and the $k$ largest will propagate through sub-heap of $l$. The propagation ends until the generalized heap property is satisfied (line 10). An example of the \textit{DEL} operation is shown in Fig. \ref{fig:heapdelete}. 

\begin{figure*}[htb]
  \centering
  \includegraphics[width=0.8\textwidth]{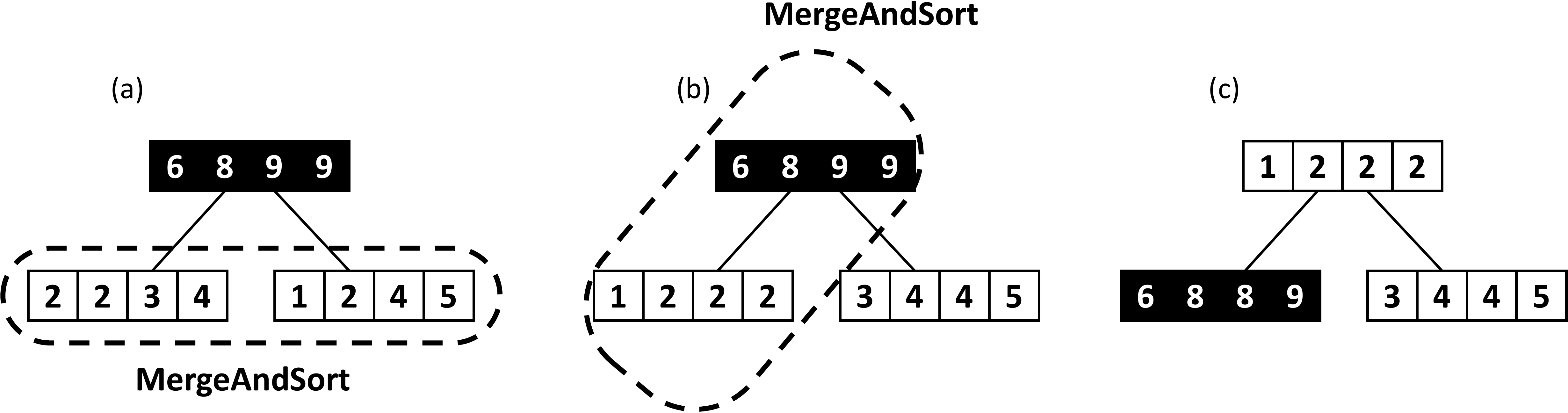}
  \caption{Example: Deletion for the generalized heap}
  \label{fig:heapdelete}
\end{figure*}

\subsubsection{\textbf{\textit{INS} Operation}}

The \textit{INS} operation inserts new keys into the generalized heap.\footnote{We suppose one \textit{INS} operation inserts at most $k$ new keys. For the case that inserting more than $k$ keys, multiple \textit{INS} operations can be invoked.} It grows the heap by adding a new node to the first empty node in the heap, we call this location the \textit{target} node. Given the \textit{target} node, a path from the root node to the target node can be found and we call it the \textit{insert path}. 

There are two possible directions for the propagation of the \textit{INS} operation, which leads to two different types of \textit{INS} operation: \textcircled{1} \textit{top-down INS}, which starts from the root node and propagates down until it reaches the target node; \textcircled{2} \textit{bottom-up INS}, which starts at the target node and propagates up until it reaches the root node or when the generalized heap property is satisfied in the middle of the heap.

\paragraph{\textbf{\textit{Top-down INS} operation}} The \textit{top-down INS} operation starts at the root node and propagates down to the bottom level of the heap. The propagation of the \textit{top-down INS} operation follows the insert path, the MergeAndSort operation is performed between the new insert items and each node on the insert path until it reaches the target node. An example of the \textit{top-down INS} operation is shown in Fig. \ref{fig:heaptopdown} and the pseudo code is provided in Fig. \ref{fig:seqopr}(a).

\begin{figure*}[htb]
  \centering
  \includegraphics[width=0.8\linewidth]{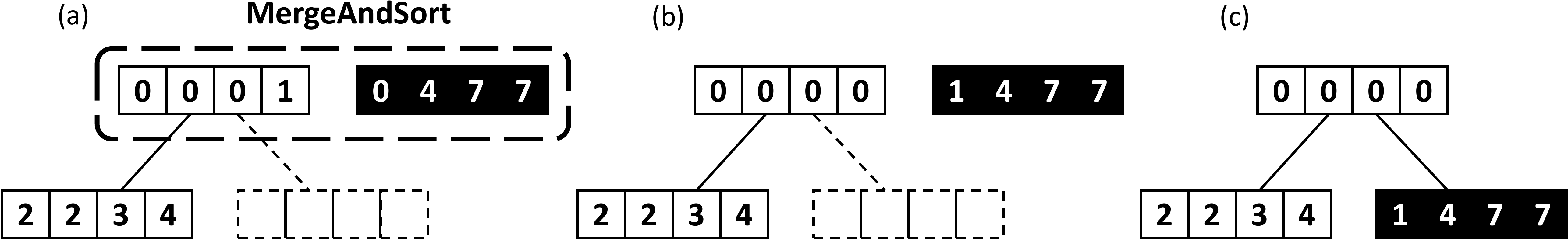}
  \caption{Example: Top-Down Insertion}
  \label{fig:heaptopdown}
\end{figure*}

\paragraph{\textbf{\textit{Bottom-up INS} operation}}
The \textit{bottom-up INS} operation inserts new keys from the bottom of the heap to the root batch of the heap and still follows the corresponding insert path. The pseudo code is shown in Fig. \ref{fig:seqopr}(b). We first move the to-insert new keys to the target node (line 3). Since the generalized heap property may be violated, the MergeAndSort operation is performed between the nodes on the insert path and their parent nodes. The propagation keeps going along the insert path until it reaches the root batch or the generalized heap property is satisfied in the middle (line 5). An example of \textit{bottom-up INS} is provided in Fig. \ref{fig:heapbottomup}.

\begin{figure*}[htb]
  \centering
  \includegraphics[width=0.8\linewidth]{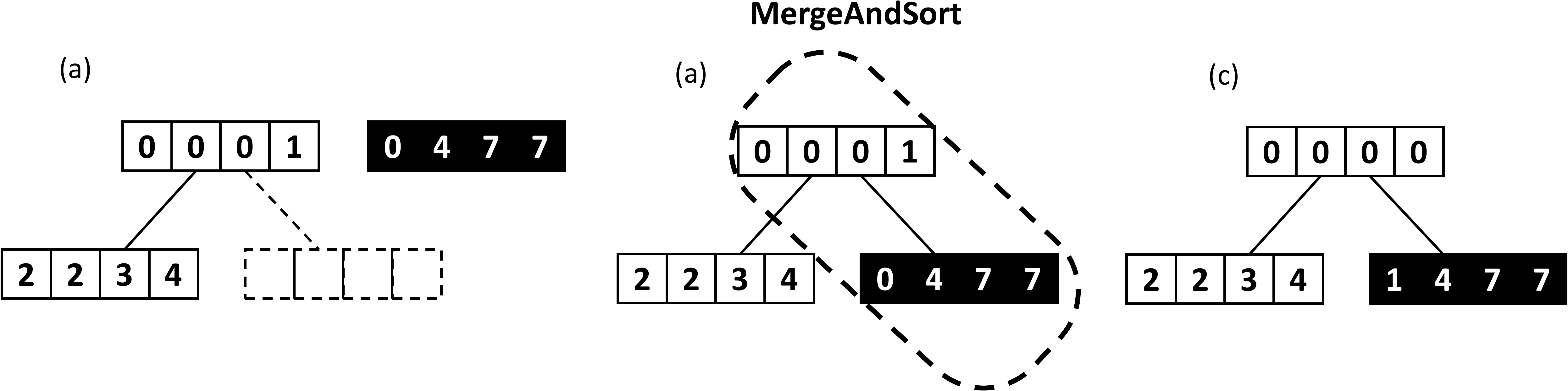}
  \caption{Example: Bottom-Up Insertion}
  \label{fig:heapbottomup}
\end{figure*}

\paragraph{\textbf{Discussion}} 

Compared to \textit{top-down INS} operation, \textit{bottom-up INS} operation may not need to traverse all the nodes on the insert path since the generalized heap property may be satisfied in the middle. Moreover, when concurrent \textit{INS} and \textit{DEL} operations are performed on the generalized heap, \textit{bottom-up INS} operation can reduce the contention on the top levels of the heap. However, the \textit{bottom-up INS} operation needs to pay attention to the potential deadlock problem caused by the opposite propagation directions of \textit{INS} and \textit{DEL} operations. We will discuss more about these concurrent \textit{INS} and \textit{DEL} operations in the following sections.

\subsection{Concurrent Heap}

In this section, we describe how \textit{DEL} and \textit{INS} operations can be performed concurrently on our parallel generalized heap. Our algorithms are inspired by the methods discussed in \cite{Nageshwara+:TC88} and \cite{Hunt+:IPL96} which introduced concurrent \textit{INS} and \textit{DEL} operations on a heap with $k=1$, with \textit{top-down INS} and \textit{bottom-up INS} operations respectively. In this paper, we call the concurrent heap with \textit{\textbf{T}op-\textbf{D}own \textbf{INS}} operation the \textit{TD-INS/TD-DEL Heap} and the one with \textit{\textbf{B}ottom-\textbf{U}p \textbf{INS}} as \textit{BU-INS/TD-DEL Heap}.

\subsubsection{Lock Order for \textit{INS} and \textit{DEL} operations} \label{sec:design:lock}

In \cite{Nageshwara+:TC88} and \cite{Hunt+:IPL96}, to support concurrent \textit{INS} and \textit{DEL} operations while ensuring correctness and avoiding deadlocks, a simple lock strategy is applied. Instead of locking the whole heap, each node of the heap is assigned a lock and only a small portion of the nodes are locked at one time. Our first implementation adopts this method. In Fig. \ref{fig:lockorder}, we show how we handle the locking order for both \textit{INS} and \textit{DEL} operations. The partial buffer is handled when the root node is locked, so that both the root node and the partial buffer is protected by the same lock. For all other nodes, each node has only one lock.

\begin{figure}[htb]
  \centering
  \includegraphics[width=1\linewidth]{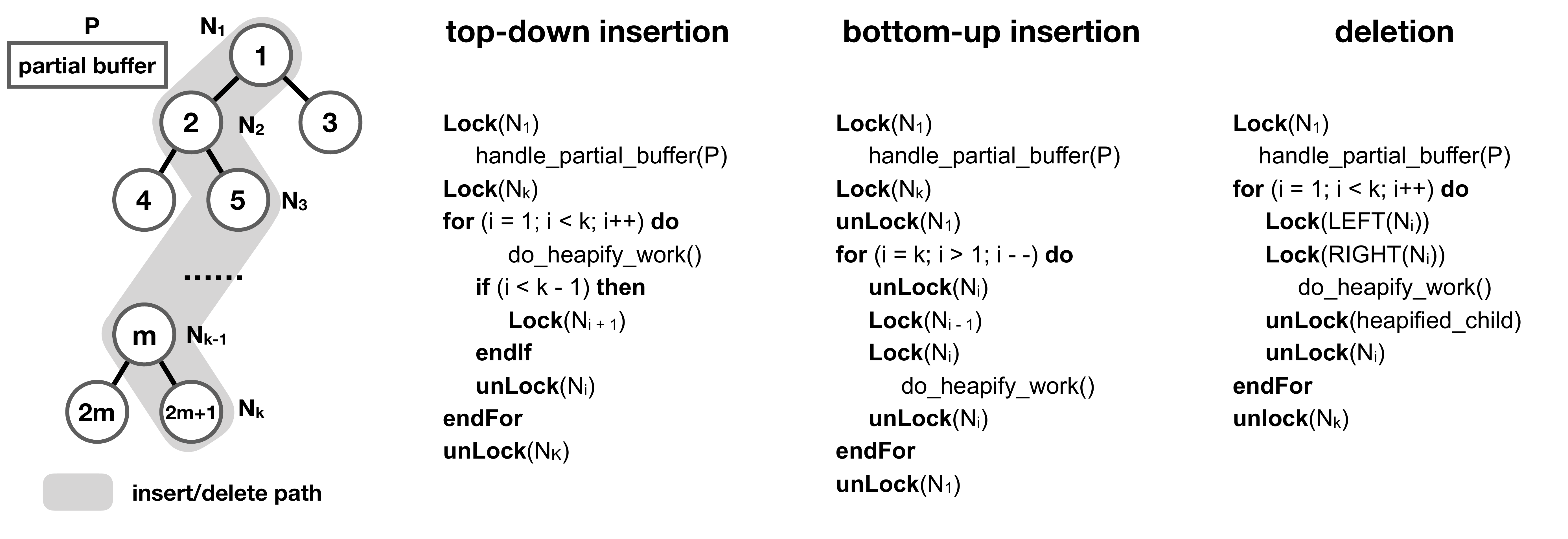}
  \caption{Lock Order for \textit{top-down INS}, \textit{bottom-up INS} and \textit{DEL} operations}
  \label{fig:lockorder}
\end{figure} 

The \textit{top-down INS} operation starts at the root node $N_1$ and propagates along the insert path $N_{1}, N_{2},...,N_{k}$. It will do the heapify work of $N_i$ when it has locked $N_i$. After it finishes its work, it will lock $N_{i+1}$ before it unlocks $N_i$ which follows a parent-child locking order. Similarly, the \textit{bottom-up INS} operation also follows the parent-child locking order. When the \textit{bottom-up INS} is at $N_i$, it will release the lock on $N_i$ first, locks $N_{i-1}$ next and then locks $N_i$. Note that, in this case, the \textit{bottom-up INS} operation does not lock any node after it releases $N_i$. The \textit{DEL} operation needs to lock more nodes during its propagation. When it is at $N_i$ and $N_i$ is locked, it then locks its two child nodes and do the heapify work. After the work is done, it unlocks the child node that is already heapified and then $N_i$. In this way, both \textit{INS} and \textit{DEL} operations follow the parent-child order so that no deadlock could happen.

 Each node of the heap is associated with a \textit{multi-state} lock. This multi-state lock has multiple states which can indicate the status of each node. The multi-state lock for \textit{top-down INS} and \textit{bottom-up INS} operations have different states. We describe the difference in the following sections.

We implement the \textit{multi-state} lock using atomicCAS. Atomic operations are well optimized on GPUs \cite{KeplerWhitePaper} which makes it a straightforward choice to implement the \textit{multi-state} lock.
 
\subsubsection{\textbf{TD-INS/TD-DEL Heap}} \label{sec:design:tditdd}

Our TD-INS/TD-DEL heap implements top-down insertions and top-down deletions, using the locking order as described in Fig. \ref{fig:lockorder}. We let the \textit{multi-state} lock have four different states: \textbf{AVAIL} indicates that the node is available; \textbf{INUSE} means that the node is acquired by another operation; To lock a node, the state of that node changes from \textbf{AVAIL} to \textbf{INUSE}. A node with state \textbf{TARGET} represents that the node is the target node of a insert operation; The state of a node is changed to \textbf{MAKRED} only when the target node is needed by a delete operation for insertion and deletion cooperation. Finite State Automata is shown in Fig. \ref{fig:fsa}.

The \textit{INS} and \textit{DEL} operations can cooperate to speedup the propagation\cite{Nageshwara+:TC88}. When the \textit{DEL} operation needs to fill the root node, if there is an \textit{INS} operation that is being in progress. The \textit{DEL} operation does not need to wait until the last leaf node to be ready (if it is not ready and if it is the target node of an in-progress insertion), it can fill the root node with the keys from the \textit{INS} operation. 

The \textit{DEL} operation changes the state of the last node from \textbf{TARGET} to \textbf{MARKED}. to let the \textit{INS} operation know that a \textit{DEL} operation is asking for the insert keys. When the \textit{INS} operation finds that the state of the target node is \textbf{MAKRED}, it moves the insert keys to the root node and terminate. The \textit{DEL} operation can then continue and use those keys in the root node for propagation.

\begin{figure}[htb]
\includegraphics[width=0.8\linewidth]{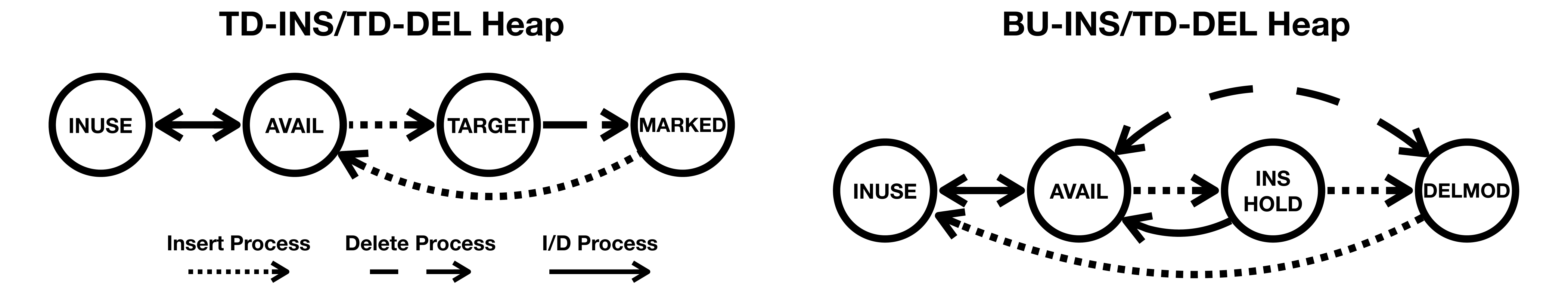}
\caption{FSA of TD-INS/TD-DEL Heap and BU-INS/TD-DEL Heap}
\label{fig:fsa}
\end{figure}

To handle partial batch insertion, we acquire the partial buffer at the time when we hold the root node. Since only one operation can work on the same node, this can make sure that no two operations can work on the partial buffer at the same time. Then we apply the MergeAndSort operation between the insert keys and the partial buffer. We check if the partial buffer have enough space to contain those new keys. If so, we perform another MergeAndSort operation between the partial buffer and the root node to satisfy the generalized heap property 3. If not, we obtain the k smallest keys from the MergeAndSort result as a full batch and propagate it down through the root node, while leaving the rest keys in the partial buffer. 

For \textit{DEL} operation, we only consider deleting the items from the partial buffer when the total number of keys is less than a full batch, in another word, all heap nodes are empty. This is because based on the generalized heap property 3, the root node always have the smallest keys in the heap. Although allowing partial batch insertion will cause extra overhead, however, the inserted keys in the partial buffer do not need to propagate into the heap immediately until the partial buffer is overflown. In this way, we still gain the benefit of memory locality and the intra-node parallelism. 

We show the pseudo code of \textit{INS} and \textit{DEL} operations on the \textit{TD-INS/TD-DEL Heap} in Fig. \ref{fig:TopDown}.

\begin{figure}[htb]
  \centering
  \includegraphics[width=1\linewidth]{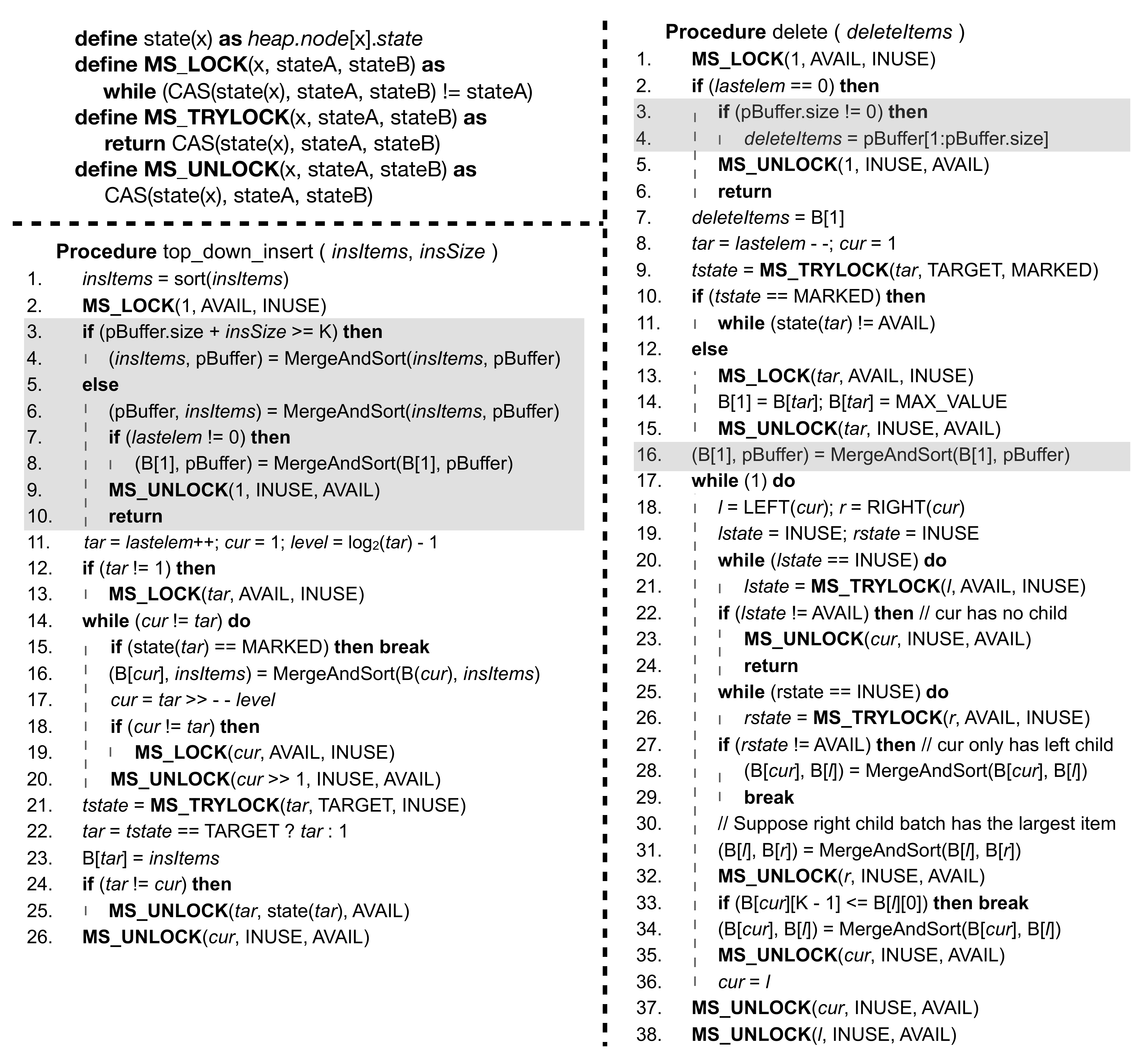}
  \caption{Pseudo Codes for \textit{TD-INS/TD-DEL Heap} Operations}
  \label{fig:TopDown}
\end{figure}

\subsubsection{\textbf{Linearizability of \textit{TD-ins/TD-del Heap}}} \label{sec:design:tdlin}

We show that the heap with top-down insertion and top-down deletion (TD-Ins/TD-Del) is linearizable. 

In order to reason about the linearizability, we need to define our notations. An \emph{ins} or \emph{del} operation takes a certain amount of time to complete. We denote the time an operation is invoked as the invocation time, the time when an operation is completed as the response time. A history includes an ordered list of invocation and response events (ordered with respect to time). 

Our TD-Ins/TD-Del implementation uses fine-grained locks that each node is associated with a lock. We denote the time a thread acquires the lock of a node as \emph{acquire} time and the time a thread release the lock of a node as \emph{release} time. 

We denote an operation with a 4-tuple followed by two parameters $op[s, ac, re, t] (x) T$. The symbol $op$ is the type of the operation, \emph{ins} or \emph{del}; $s$ is the invocation time, $t$ is the response time; $ac$ refers to \emph{acquire} time of a node of interest; $re$ refers to \emph{release} of the same node; $x$ is the parameter of the operation, if the operation is insertion, it means insert $x$ into the heap, if the operation is deletion, it means $x$ is returned; $T$ refers to the thread id. Note that both $ac$ and $re$ are within the time interval $s$ and $t$, and that $ac < re$. 

To prove linearizability, we need to show that for any given history H, we can find a correct sequential history S based on a valid reordering of invocation and response events in H. Here the term ``valid reordering" refers to the case when there are two operations $e_0$ and $e_1$ in H, if the response time of $e_0$ is before the invocation time of $e_1$, $e_0$ will proceed $e_1$ in the sequential history. 

To prove such a sequential history exist, we prove the following lemma first. 
\begin{lemma}
No two threads can work on the same heap node simultaneously.
\label{lem:notwothread}
\end{lemma}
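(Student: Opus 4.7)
The plan is to reduce the claim to the atomicity of the per-node multi-state lock, since in our TD-Ins/TD-Del scheme a thread is defined to be ``working'' on a node exactly when it holds that node's lock in the \textbf{INUSE} state. I would start by fixing a node $N$ and a thread $T_1$ that has just transitioned the state of $N$ from \textbf{AVAIL} to \textbf{INUSE} via an atomicCAS at its \emph{acquire} time $ac_1$, and will release it at $re_1$. The goal is to show that any other thread $T_2$ cannot simultaneously hold $N$, i.e.\ cannot have an overlapping interval $[ac_2, re_2]$ with $[ac_1, re_1]$.

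First, I would argue from the semantics of atomicCAS that every state transition of the lock word of $N$ is serialized by the hardware: between $ac_1$ and $re_1$, the state of $N$ is in $\{\textbf{INUSE}, \textbf{TARGET}, \textbf{MARKED}\}$ and in particular is never \textbf{AVAIL}. Then I would examine what $T_2$ would have to do to begin working on $N$. By inspecting the pseudocode in Fig.~\ref{fig:TopDown} (for both \emph{INS} and \emph{DEL}), acquiring $N$ always requires a successful atomicCAS of $N$'s state word from \textbf{AVAIL} to \textbf{INUSE}. Since the state is not \textbf{AVAIL} throughout $[ac_1, re_1]$, any such CAS attempted by $T_2$ during this interval must fail, and the operation either spins or moves on without holding $N$.

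Next I would address the cooperation case, which is the subtle one: when $T_1$ runs an \emph{INS} that has marked $N$ as \textbf{TARGET}, a concurrent \emph{DEL} thread $T_2$ may attempt a CAS from \textbf{TARGET} to \textbf{MARKED}. I would argue that this CAS, being atomic, is a single state-word transition rather than an acquisition: the semantics is purely a signal to $T_1$, and $T_2$ does not read, write, or heapify the contents of $N$; it continues without entering the \textbf{INUSE} state on $N$. Symmetrically, once $T_1$ observes \textbf{MARKED}, it ceases to modify $N$ beyond moving the insert keys out. So the cooperative handshake is serialized by successive atomicCAS operations on the same word, and at no point do two threads concurrently mutate the keys stored in $N$.

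Finally, I would conclude by noting that the above reasoning is uniform over all nodes and both operation types, so for every heap node $N$ and every pair of threads $T_1, T_2$, their working intervals on $N$ are disjoint. The main obstacle, as indicated above, is carefully articulating the cooperation between \emph{INS} and \emph{DEL} via the \textbf{TARGET}/\textbf{MARKED} states so that the reader is convinced that a state-word CAS by the cooperating thread does not count as ``working on'' the node in the sense of the lemma; this will be the one place where a purely lock-based argument needs to be supplemented by an explicit appeal to the cooperation protocol in Fig.~\ref{fig:fsa}.
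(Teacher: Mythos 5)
Your proposal is correct and takes essentially the same route as the paper: the paper's own proof is a single sentence observing that once a thread has changed a node's state from \textbf{AVAIL} to \textbf{INUSE}, no other thread can acquire that node until it is released. Your additional care with the \textbf{TARGET}/\textbf{MARKED} cooperation handshake is an elaboration the paper's one-line proof omits entirely, but it supplements rather than changes the underlying lock-atomicity argument.
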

\begin{proof}
According to our implementation, if a thread T has acquired the node $B$ which means the T has changed $B$'s state to INUSE, then no other thread can acquire $B$ until T releases it. 
\end{proof}

We denote a history H with n operations as $H$ = \{ $op^H_i$($s_i$, $acR_i$, $reR_i$, $t_i$) $x^H_i$ $T^H_i$ | $1 \le i \le n $ \} \footnote{This is slightly different from traditional notation of a history, but means the same.}. Here $acR$ and $reR$ respectively refer to the acquire and release of the lock for the root node in the heap. In our notation, the history $H$ is an ordered list such that its operations are ordered with respect to the time the root node is released. Since each operation in TD-INS/TD-DEL heap needs to acquire the root at its first step in fig. \ref{fig:TopDown}, and based on Lemma \ref{lem:notwothread}, only one thread can successfully acquire root node at one time. Thus for two operations $op^H_{u}$($s_{u}$, $acR_{u}$, $reR_{u}$, $t_{u}$ ) $x^H_{u}$ $T^H_{u}$, and $op^H_{v}$($s_{v}$, $acR_{v}$, $reR_{v}$, $t_{v}$) $x^H_{v}$ $T^H_{v}$, we have $ u < v $ if and only $reR_{u} < acR_{v}$.  

\begin{theorem}
The TD-INS/TD-DEL heap is linearizable. 
\label{topdown}
\end{theorem}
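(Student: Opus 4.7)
The plan is to place the linearization point of each operation at the instant it releases the root lock, i.e., at $reR_i$. This is natural because in the TD-INS/TD-DEL protocol every operation acquires the root as its very first lock, so by Lemma \ref{lem:notwothread} the root acquisitions of distinct operations are totally ordered, and the indexing of $H$ used above is already exactly this order. Real-time respect then falls out of a single inequality chain: if $op^H_u$ returns before $op^H_v$ is invoked,
\[ reR_u \le t_u < s_v \le acR_v < reR_v, \]
so $op^H_u$ precedes $op^H_v$ in the induced sequential order $S$.

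The substantive work is to show that $S$ is a legal sequential heap history, which I would prove by strong induction on the operation index $i$, maintaining the invariant that whenever $op^H_i$ touches a node $N$, the contents of $N$ agree with what would result from sequentially replaying $op^H_1, \dots, op^H_{i-1}$. For the root the invariant is nearly immediate: root acquisitions are strictly serialized on the single root lock, so each predecessor $op^H_j$ has completed its at-root MergeAndSort before $op^H_i$ reads the root, and the invariant transfers from $i-1$ to $i$. For a deeper node $N$ lying on both $op^H_j$'s and $op^H_i$'s traversal, the hand-over-hand parent-child locking order of Section \ref{sec:design:lock} forces $op^H_j$ to hold some ancestor of $N$ at every instant of its descent, and so $op^H_i$, which must grab each of those ancestors in turn, cannot acquire $N$ until $op^H_j$ has finished its work at $N$ and released it. Nodes off the common path are untouched by one of the two operations and are therefore unaffected. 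Consequently, at every node the local MergeAndSort and heapify steps compose in precisely the linearization order, so both the return values of each DEL and the resulting heap contents agree with the sequential execution of $S$.

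The hard part is the INS/DEL cooperation through the TARGET/MARKED states, in which a DEL retrieves its root-refill from the still-propagating keys of an unfinished INS rather than from the physical last leaf. My plan here is to verify that any such cooperating pair is already consistent with the chosen linearization: a DEL can only discover the TARGET mark after the INS has released the root and descended far enough to deposit it, so $reR_{\mathrm{INS}} < reR_{\mathrm{DEL}}$ and the INS indeed precedes the DEL in $S$. The subtle point is to check that the short-circuited behaviour is observationally equivalent to the sequential pair ``INS completes its full descent; DEL then moves the (now-filled) target into the root''---this will require tracking how the partially-propagated keys interact with the intermediate MergeAndSort steps already performed by INS, and showing that DEL's subsequent heapify from the root re-establishes the state that the sequential execution would have produced at every node reachable from the root. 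Updates to the partial buffer are protected by the root lock, so their serialization is subsumed by the root argument. Combining these pieces closes the induction and proves Theorem \ref{topdown}.
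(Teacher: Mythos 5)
Your proposal matches the paper's proof in all essentials: both linearize each operation inside its root-lock critical section (the paper picks an arbitrary point $N_i$ with $acR_i < N_i < reR_i$ where you pick $reR_i$ itself, which induces the same total order since root acquisitions are serialized by Lemma \ref{lem:notwothread}), both order the history by root release, and both argue by induction on the operation index that every DEL returns the minimum of the sequentially-replayed key set. If anything your plan is more thorough than the paper's: the published proof only tracks the global key set and the root's contents (properties L1 and L2) and never discusses the hand-over-hand locking below the root, nor the TARGET/MARKED insert--delete cooperation that you correctly single out as the delicate case --- though note that this last piece remains an acknowledged to-do in your sketch rather than a completed argument.
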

\begin{proof}
We show that we can construct a sequential history S given any H. To construct the sequential history, we first construct a list of linearization points \{ $N_i$ | i = 1 to n \} such that  $acR_i$ $<$ $N_i$ $<$ $reR_i$. Simply put, $N_i$ is an arbitrary time point between every pair of events that acquire and release the root node. An example of setting linearization point is shown in Fig. \ref{fig:topdownlinear}. 

An operation appear to occur instantaneously at its linearization point. A linearization point has to be between the invocation time and response time for an operation, since acquiring and releasing root is a step within every update in TD-Ins/TD-Del heap, the $N_i$ time points we set here is naturally between invocation and response time. 

Next we construct a sequential history as 

\begin{center}
$S$ = \{ $op^S_i$($N_i$) $x^S_i$ $T^S_i$ | $1 \le i \le n $ \}
\end{center}

We set a one-to-one correspondence between the i-th operation $op^H_i$($s_i$, $acR_i$, $reR_i$, $t_i$) $x^H_i$ $T^H_i$ in H and the i-th operation $op^S_i$($N_i$) $x^S_i$ $T^S_i$ in S. We set $T^S_i = T^H_i$ and $op^S_i = op^H_i$. For all insert operations, we set $x^S_i = x^H_i$, which means we insert the same key items for the corresponding operation in S. Next we prove that $x^S_i = x^H_i$ for every delete update, which means every delete operation in S returns the same value as its corresponding delete operation in H. 

Assume that when the m-th operation in H releases root node at time $reR_m$, the heap value is $Heap^H_m$, its set of keys are denoted as $set(Heap^H_m)$, and its root node is denoted as $root(Heap^H_m)$. Similarly, at the time $N_m$ of the m-th operation in S, assume the heap value is $Heap^S_m$, its set of keys are denoted as $set(Heap^S_m)$, and its root node is denoted as $root(Heap^S_m)$. We prove two properties:
\begin{enumerate}
    \item[L1:] $set(Heap^H_i) = set(Heap^S_i)$, $1 \le i \le m$
    \item[L2:] $root(Heap^H_i) = min(set(Heap^H_i))$ and $root(Heap^S_i) = min(set(Heap^S_i))$, $1 \le i \le m$
\end{enumerate}

We prove properties L1 and L2 by induction. Initially we have $H_0$ which is an empty history, and a heap value $Heap^H_0$. We set $S_0$ empty and set $Heap^S_0$ to be the same as $Heap^H_0$. Properties L1 and L2 are satisfied for the initial heap. 

Assume at the time point $reR_k$ in H and at the time point $N_k$ in S, properties L1 and L2 hold. We just need to prove for the time point $reR_{k+1}$ in H and the time point $N_{k+1}$ in S, properties L1 and L2 hold as well. There are  two cases.  

\textbf{Case I} -- the (k+1)-th operation in H is an insertion: $ins^H_{k+1}(s_{k+1}, acR_{k+1}, reR_{k+1}, t_{k+1}) x^{H}_{k+1} T^{H}_{k+1}$. At the time $reR_{k+1}$, since when a thread releases a root node in our implementation, the new item $x^{H}_{k+1}$ is already merged with the original root and the smaller item of the merged result is kept in root while the larger item may or may not propagate down the heap. Therefore the root node should contain the smallest item after taking the new item $x^{H}_{k+1}$ into consideration. Formally, $set(Heap^H_{k+1}) = set(Heap^H_{k+1}) \bigcup x^{H}_{k+1} $ and $root(Heap^H_{k+1}) = min(set(Heap^H_{k+1}))$. 

In the sequential history, since we also set (k+1)-th operation as insert update at time $N_{k+1}$, the operation is $ins^S_{k+1}(N_{k+1}) x^{S}_{k+1} T^{H}_{k+1}$, where $x^{S}_{k+1} = x^{H}_{k+1}$. In sequential history, the insertion incurs as if instantaneously, thus $set(Heap^S_{k+1}) = set(Heap^S_{k}) \bigcup x^{H}_{k+1} $ and $root(Heap^S_{k+1}) = min(set(Heap^S_{k+1}))$. Thus $set(Heap^H_{k+1}) = set(Heap^S_{k+1})$, $root(Heap^H_{k+1}) = min(set(Heap^H_{k+1}))$ and $root(Heap^S_{k+1}) = min(set(Heap^S_{k+1}))$. Properties L1 and L2 hold. 

\textbf{Case II} -- If the (k+1)-th operation in H is a deletion, $del^H_{k+1}(s_{k+1}, acR_{k+1}, reR_{k+1}, t_{k+1}) x^{H}_{k+1} T^{H}_{k+1}$. In our implementation, the deletion removes the root which is $min(set(Heap^H_k))$ and re-heapifies the heap. It does not release root until root node is updated to the smallest of the remaining items, thus at the time point $reR_{k+1}$, $root(Heap^H_{k+1}) = $ $min(set(Heap^H_{k}) - min(set(Heap^H_k)))$. 

In the sequential history, we set a delete operation at the time point $N_{k+1}$, as if the delete update happens instantaneously. Then the delete update returns $min(set(Heap^S_{k}))$ which is the same as $min(set(Heap^H_{k})))$, and in the meantime, root node will be updated to $min(set(Heap^S_{k}) - min(set(Heap^S_k)))$ which is also the same as the root of $Heap^H_{k+1}$ as described above. Thus properties L1 and L2 still hold.

 Thus we have successfully constructd a sequential history S from any given history H. Therefore, the TD-INS/TD-DEL heap is linearizable.

\end{proof}

\begin{figure*}[h]
  \centering
  \includegraphics[width=0.6\linewidth]{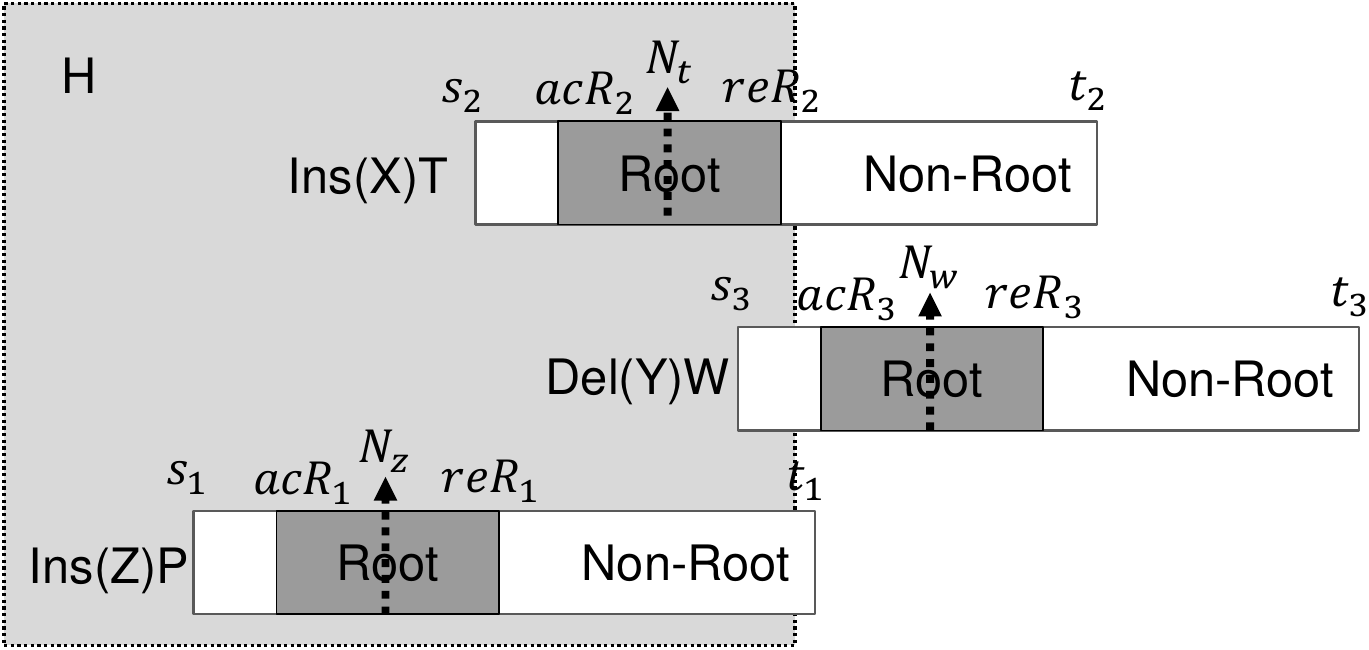}
  \caption{Example: set up of linearization points for TD-INS/TD-DEL heap operation history}
  \label{fig:topdownlinear}
\end{figure*}

\subsubsection{\textbf{BU-INS/TD-DEL Heap}}
To reduce the contention on the root node of \textit{top-down INS} operations, Hunts et.al \cite{Hunt+:IPL96} proposed a mechanism that does \textit{bottom-up INS} operations while solving the potential deadlocks from opposite propagation direction. It allows the insert thread temporarily releases the control of the insert items and a tag is used to store the $pid$ of the thread that modifies the insert items. Here we use a similar method but does not need to store the $pid$. 


The \textit{multi-state} lock used in \textit{BU-INS/TD-DEL Heap} have four different states. \textbf{INUSE} and \textbf{AVAIL} are the same as the ones used in \textit{TD-INS/TD-DEL Heap}. Since the insert operation may temporarily release the control of its node, so it uses the state to tell whether the node has been modified by the time it releases the node. It changes the state of the node from \textbf{INUSE} to \textbf{INSHOLD} when it releases the node. When the insert operation acquire the node again, if it finds that the state is no longer \textbf{INSHOLD}, this means one or more delete operations have modified this batch and the insert operation can skip this batch since the delete operation makes sure the sub-heap has satisfied the \textbf{generalized heap property}. On the contrary, the delete operation which acquires the node from \textbf{INSHOLD} to \textbf{INUSE} will change the state to \textbf{DELMOD} when releasing the node.

We show the FSA of the BU-INS/TD-DEL Heap in Fig. \ref{fig:fsa}. The pseudo codes of the concurrent insert and delete operations in \textit{bottom-up} manner are shown in Fig. \ref{fig:BottomUp}. 

When an insert operation acquires the temporarily released node, there are three possible cases for the new state with that node:
\begin{enumerate}
    \item \textbf{INSHOLD}: the insert operation holds the batch successfully and the MergeAndSort operation can be performed with the parent batch.
    \item \textbf{DELMOD}: the batch has been modified by one or more delete operations, the insert operation can move to the parent batch.
    \item \textbf{INUSE}: some other operations are using this batch
\end{enumerate}


\begin{figure}[htb]
  \centering
  \includegraphics[width=1\linewidth]{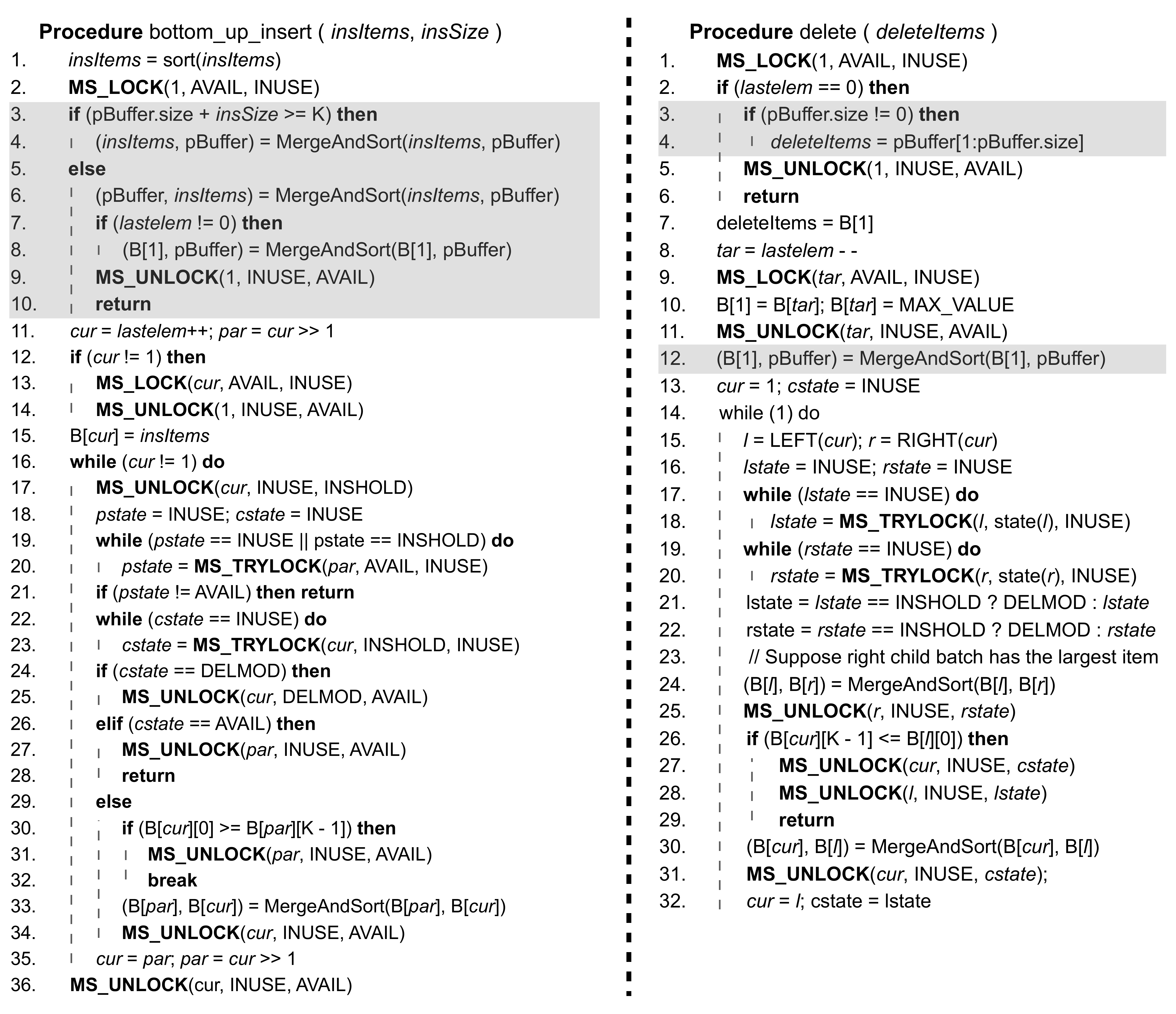}
  \caption{Concurrent Insertion and Deletion with partial batch in Bottom-Up Mode}
  \label{fig:BottomUp}
\end{figure}

The state of the parent node may also be changed. If the state is not \textbf{AVAIL} or \textbf{INUSE}, this means a delete operation has already deleted the parent node and the insert operation can terminate.

Partial buffer is handled at the beginning of each operation when it locks the root node. For both \textit{INS} and \textit{DEL} operations, the part for handling the partial buffer is exactly the same as the one we showed in Section \ref{sec:design:tditdd} so as to make sure generalized heap property 3 is satisfied.

\subsubsection{\textbf{Linearizability of BU-ins/TD-del Heap}} Now we show that the \textit{BU-INS/TD-DEL Heap} is linearizable. Note that we will use the same notations that we have defined previously in Section \ref{sec:design:tdlin}. We use the notation $ins_{i}^{H}(s_{i},acL_{i},reL_{i},t_{i})x_{i}^{H}T_{i}^{H}$ for \textit{INS} operation $i$ and $del_{j}^{H}(s_{j},acR_{j},reR_{j},t_{j})x_{j}^{H}T_{j}^{H}$ for \textit{DEL} operation $j$. Here $acL$ and $reL$ respectively refer to the acquire and release for the last locked node in a \textit{bottom-up INS} operation. This last locked node may or may not be the root node, since the generalized heap property may be satisfied in the middle of a bottom-up $INS$ update.

We denote a history H with n operations as $H = \{ op_{i}^{H}(s_{i},ac_{i},re_{i},t_{i})x_{i}^{H}T_{i}^{H}|1\leq i \leq n \}$ while $ac_i$ and $re_i$ can be either $acL_i$ and $reL_i$ for \textit{INS}, or $acR_i$ and $reR_i$ (R is for root \footnote{The same notation is already used in Section \ref{sec:design:tdlin}}) for \textit{DEL}. The operations in H are ordered with respect to the time $re_i$ ( either when \textit{INS} release the last locked node or when \textit{DEL} release the root node ). It is possible that the time $re_u$ and $re_v$ are the same for two operations $op_u$ and $op_v$. In this case, an arbitrary order can be chosen. Thus for two operations $op^H_{u}$($s_{u}$, $ac_{u}$, $re_{u}$, $t_{u}$ ) $x^H_{u}$ $T^H_{u}$, and $op^H_{v}$($s_{v}$, $ac_{v}$, $re_{v}$, $t_{v}$) $x^H_{v}$ $T^H_{v}$, we have $ {u} < {v} $ when $re_{u} < re_{v}$ or $re_{u} == re_{v}$.  

\begin{lemma}
\label{lem:btupoverlap}
Given a delete operation $del_{i}^{H}(s_{i},acR_{i},reR_{i},t_{i})x_{i}^{H}T_{i}^{H}$ and an insert operation $ins_{j}^{H}$($s_{j}$, $acL_{j}$, $reL_{j}$, $t_{j}$)$x_{j}^{H}T_{j}^{H}$ that $[acL_j,reL_j]\cap[acR_i,reR_i]$ $\neq\emptyset$, then $x_{i}^{H} \leq x_{j}^{H}$
\end{lemma}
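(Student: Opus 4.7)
The plan is to combine the mutual-exclusion guarantee of Lemma \ref{lem:notwothread} with the termination condition of the bottom-up propagation. First I will invoke Lemma \ref{lem:notwothread}: since delete $i$ holds the root lock throughout $[acR_i, reR_i]$ and insert $j$ holds its last-locked node throughout the overlapping $[acL_j, reL_j]$, these two locked nodes must be distinct, so the last locked node $B$ of insert $j$ cannot be the root. The bottom-up $INS$ propagation reaches and halts at such a $B \ne \text{root}$ only when the check $\min(\text{items in } B) \ge \max(\text{items in parent}(B))$ succeeds at the final comparison inside $[acL_j, reL_j]$, certifying the generalized heap property on the edge from $B$ up to $\text{parent}(B)$ at that moment.

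Next I will track where the inserted items reside. Starting from the target leaf, each MergeAndSort along the insert path sends the smaller $k$ of the combined batch upward and keeps the larger $k$ at the child, so after the propagation settles the items of $x_j$ lie entirely on the insert path from the target leaf up to and including $B$; in particular they all sit inside the subtree rooted at $B$. Applying the generalized heap property inside this subtree gives that every $x_j$ item is $\ge \min(\text{items in } B)$. Going the other direction, insert $j$ never acquires or modifies any strict ancestor of $B$, so the chain from $\text{parent}(B)$ up to the root still satisfies the generalized heap property, yielding $\max(\text{root}) \le \min(\text{parent}(B)) \le \max(\text{parent}(B)) \le \min(B)$. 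Chaining gives $\max(x_i) = \max(\text{root items at } acR_i) \le \min(B) \le \min(x_j)$, which is precisely $x_i \le x_j$.

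The hard part will be making the chain robust in the presence of other concurrent operations: the root value at $acR_i$ and the heap-property bounds on the ancestors of $B$ must refer to a mutually consistent state. My intended justification is that delete $i$ holds the root lock throughout $[acR_i, reR_i]$, blocking any other delete or partial-buffer phase from mutating the root during the overlap; and that every operation in the $BU$-$INS/TD$-$DEL$ protocol preserves the generalized heap property on edges whose endpoints it is not currently locking. Since insert $j$ locks no strict ancestor of $B$, the path from $\text{parent}(B)$ up to the root is either in a settled heap-consistent state or held by an operation that will itself leave it consistent upon release, which is enough to anchor the inequality $\max(x_i) \le \min(B)$ at time $acR_i$ and close the argument.
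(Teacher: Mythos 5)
Your proposal is correct and follows essentially the same route as the paper's own (very terse) proof: invoke Lemma \ref{lem:notwothread} together with the interval overlap to conclude that the last node locked by the insert is not the root, and then use the generalized heap property along the path from that node up to the root to derive $\max(x_i^H) \le \min(x_j^H)$. You simply spell out the chain of inequalities and the state-consistency issues that the paper's two-sentence argument leaves implicit.
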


\begin{proof}
Based on Lemma \ref{lem:notwothread} and the condition $[acL_j,reL_j]\cap[acR_i,reR_i]$ $\neq\emptyset$, we know that the last locked node of the insert operation is not the root node. Thus we can derive that $min(x_j^H) \geq max(x_i^H)$ which indicates that $x_{i}^{H} \leq x_{j}^{H}$.
\end{proof}

\begin{theorem}
The \textit{BU-INS/TD-DEL} heap is linearizable.
\label{thrm:butd}
\end{theorem}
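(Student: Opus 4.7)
The plan is to adapt the linearizability proof of Theorem~\ref{topdown} to handle the fact that a bottom-up insertion need not ever lock the root. First I would set up linearization points in the same spirit: for each operation $i$ I pick $N_i \in (ac_i, re_i)$, where $ac_i, re_i$ refer to the last acquire/release pair for that operation---the root lock for a \textit{DEL} (or a bottom-up \textit{INS} that propagates all the way up), and the lock on the intermediate stopping node for a bottom-up \textit{INS} that terminates early. Just as in the \textit{TD-INS/TD-DEL} case, $N_i$ falls strictly between invocation and response. Ordering the operations in $H$ by $re_i$ (with ties broken arbitrarily, as the paper's definition of $H$ permits), I define the sequential history $S$ so that the $i$-th operation takes effect instantaneously at $N_i$ with the same thread id and argument.

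Next I would prove, by induction on $k$, the same two invariants used in Theorem~\ref{topdown}: (L1) $set(Heap^H_k) = set(Heap^S_k)$, and (L2) in both $H$ and $S$ the root equals the minimum of the key-set. Here $Heap^H_k$ denotes the logical heap obtained by applying the first $k$ release-ordered operations of $H$ to the initial heap, so keys from still-in-flight later operations are not counted. The \textit{DEL} case and the \textit{INS}-reaches-root case carry over essentially verbatim from the proof of Theorem~\ref{topdown}. The genuinely new case is a bottom-up \textit{INS} that terminates below the root. In $H$ the root is untouched by this \textit{INS}, so I must justify that the corresponding sequential insertion at $N_{k+1}$ also leaves the root invariant. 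The key observation is that a bottom-up \textit{INS} terminates only once the generalized heap property is restored along its insert path; thus $Heap^H_{k+1}$ is a valid generalized heap whose set is $set(Heap^H_k)\cup x^H_{k+1}$, and by validity its root equals the minimum of this union. Combined with the fact that the root was not modified in $H$, this forces $\min(x^H_{k+1}) \ge root(Heap^H_k)$, so the sequential insertion likewise does not change the root and L1, L2 are preserved.

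The main obstacle is handling the interaction between a \textit{DEL} whose release time precedes that of a concurrently running bottom-up \textit{INS}. If $re_i < re_j$ with $del_i$ and $ins_j$ overlapping, then at time $reR_i$ the underlying array in $H$ may already contain keys of $x_j^H$ sitting in transit below the root, so naively the raw array state at $reR_i$ differs from $Heap^S_i$. This is precisely where Lemma~\ref{lem:btupoverlap} is used: the lemma guarantees $x_i^H \le x_j^H$, so the in-transit keys of $ins_j$ cannot contain the global minimum that $del_i$ returns, and therefore the value $del_i$ removes in $H$ still matches $\min(set(Heap^S_i))$. Once $Heap^H_k$ is defined in terms of the release-ordered prefix rather than the instantaneous array contents, and once Lemma~\ref{lem:notwothread} is used to rule out simultaneous writes to the same node, the induction closes by combining this observation with the three cases above to conclude that $S$ is a legal sequential history, establishing linearizability of the \textit{BU-INS/TD-DEL} heap.
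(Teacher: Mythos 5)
Your proposal is correct and follows essentially the same route as the paper's proof: the same choice of linearization points (last-locked-node interval for a bottom-up \textit{INS}, root interval for a \textit{DEL}), the same release-time ordering of $H$, the same convention that $Heap^H_k$ is the logical key-set of the release-ordered prefix rather than the raw array contents, and the same appeal to Lemma~\ref{lem:btupoverlap} to argue that keys of overlapping in-flight insertions cannot affect the minimum a concurrent deletion returns. The only difference is granularity---you induct operation-by-operation maintaining L1/L2, whereas the paper inducts from one deletion to the next and splits the intervening insertions into overlapping and completed sets---but the substance is identical.
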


\begin{proof}

We show that we can construct a sequential history S given any H. We construct a list of linearization points \{ $N_i$ | i = 1 to n \} such that $acR_i$ $<$ $N_i$ $<$ $reR_i$ if the i-th operation is  \textit{DEL}, or $acL_i$ $<$ $N_i$ $<$ $reL_i$ if the i-th operation is \textit{INS}. We pick $N_i$ as an arbitrary time point between the provided time range. We construct the sequential history $S$ = \{ $op^S_i$($N_i$) $x^S_i$ $T^S_i$ | $1 \le i \le n $ \} based on the linearization points. Each operation  $op^H_i$ in S corresponds to an operation $op^S_{i}$ in H. 

Like what we did in Section \ref{sec:design:tdlin}, we set $x^S_i = x^H_i$ if the i-th operation is \textit{INS}. We will prove that $x^S_j = x^H_j$ if the j-th operation is \textit{DEL}. We prove by induction. Initially we have an empty history $H_0$ and the heap value $Heap^H_0$. We set $S_0$ empty and set $Heap^S_0$ to be the same as $Heap^H_0$. At the beginning, we perform a (dummy) deletion in H and also a (dummy) deletion in S, both \textit{DEL} return the same result. The dummy deletion in H completes before any real operation starts. The heap value for S and for H after the dummy deletion will be be the same.  

Assume that $x^S_k = x^H_k$ at the time $re_k$ in H and at $N_k$ in S while the k-th operation in H is a delete operation $del^H_k$ and the k-th operation in S is also a delete operation $del^S_k$. Additionally, $set(Heap^S_k) = set(Heap^H_k)$, where $Heap^S_k$ is the heap value at the time $N_k$ in S, $Heap^H_k$ is not necessarily the heap value at the time $reR_k$ in H, rather, it is the set of keys that are contributed by all preceding insertion/deletion in H's ordered list (note that the operations in H are already ordered, see the beginning of Section 3.3.5). Let the next delete update in H be the (k+m)-th operation $op^H_{k+m}$. If we prove that at the time point $re_{k+m}$ in H and $N_{k+m}$ in S with $m\geq1$, both delete operations return the same value, that is $x^S_{k+m} = x^H_{k+m}$, and $set(Heap^S_{k+m}) = set(Heap^H_{k+m})$, then we prove that all matching delete operations in S and H return the same value by induction.

In the concurrent history $H$, between time point $reR_{k}$ and $reR_{k+m}$, there are $m-1$ concurrent operations and these operations are all insert operations. Among all these m - 1 insert operations, we let $I^{H}$ be a set of insert operations such that

$I^H$ = \{$ins_{i}^{H}(s_{i},acL_{i},reL_{i},t_{i})x_{i}^{H}T_{i}^{H}$ | $k < i < k+m $, $(acL_{i}, reL_{i})$ $ \cap (acR_{k+m}, reR_{k+m})$ $\neq \emptyset$\} 

We let the set $I'$ be the insert operations from these $m-1$ operations but not in $I$. The difference between $I$ and $I'$ is that all $I'$ operations complete before $op^H_{k+m}$, while $I$ operations might overlap with $op^H_{k+m}$. If we consider the inserted keys contributed by I' as $X^{I'} = \bigcup_{i\in I'}x^H_i $. The set of keys in the heap would be $X^{I'} \cup Heap^H_{k}$ if none of the operations in I has taken effect, the minimum would be $min(X^{I'} \cup Heap^H_{k})$. 

Let $X^{I} = \bigcup_{j \in I}x^H_j $, it is not difficult to show that $min(X^{I'} \cup Heap^H_{k}) = min(X^{I'} \cup X^{I} \cup Heap^H_{k})$.  According to Lemma \ref{lem:btupoverlap}, for any insertion $i$ in $I$, since its (acL, reL) interval overlaps with the root acquiring and releasing interval for $del^H_{k+m}$,  we know that the last node locked by operation i cannot be the root, and thus the inserted value $x^S_{i}, i \in I $ cannot be smaller than the root node. That is $x^S_{i} \ge min(X^{I'} \cup Heap^H_{k}) $ for any $i \in I$. Therefore, $min(X^{I'} \cup Heap^H_{k}) = min(X^{I'} \cup X^{I} \cup Heap^H_{k})$ is proved. The implication is that regardless if any operation or all operations in the set I complete, $op^H_{k+m}$ will always return $min(X^{I'} \cup X^{I} \cup Heap^H_{k})$ which is $min(\bigcup_{k<i<k+m} x^H_i \cup Heap^H_{k} )$.

In the sequential history $S$, there are $m-1$ insert operations between time point $N_k$ and $N_{k+m}$. The heap should include the set of keys that are in $Heap^S_{k}$ and also $x^S_i$ ($ k < i < k+m $). Thus $op^H_{k+m}$ returns $min(\bigcup_{k < i < k+m} x^S_i \cup Heap^S_{k} )$. 

Since $set(Heap^S_{k}) = set(Heap^H_{k})$ and all matching insertion operations use the same parameter value for H and S, both $op^H_{k+m}$ and $op^S_{k+m}$ return the same value. Note that it is trivial to prove that $set(Heap^H_{k+m}) = set(Heap^S_{k+m}) $

Thus we have successfully constructed a sequential history S from any given history H. Therefore, the BU-INS/TD-DEL heap is linearizable

\end{proof}

\section{Implementation}

The building blocks of the generalized heap include the sorting operation, the MergeAndSort operation, and the \textit{multi-state} lock that we introduced in Section \ref{sec:design:lock}. We use parallel bitonic sorting algorithm for local sorting operation within a thread block and merge-path\cite{Oded+:SC12} algorithm for the MergeAndSort operation. We introduce optimization to eliminate redundant MergeAndSort operations and enable an early stop mechanism to reduce the total computation load and alleviate the contention on the locks.

In our implementation, threads in one thread block work together for one \textit{INS} and \textit{DEL} operation. We choose thread-block-level operation since barrier synchronization is provided within a thread block while no built-in inter-CTA synchronization is provided and the overhead for synchronization between all thread blocks is high. 

Besides, threads within the same thread block have access to the same shared memory space, which can increase data reuse during propagation. We load frequently used items into the shared memory. Using thread-block-level \textit{INS} and \textit{DEL} operations can also benefit from memory coalescing. The items in the same node are placed continuously in the memory so that threads within the warp can achieve maximum memory coalescing. Also, the multi-state lock is a thread-block-level lock which is safer than a thread-level lock since a thread-level lock may cause deadlock due to desynchronization within a warp \cite{Wong+:ISPASS10}.

\subsection{Sorting Operation}

The \textit{INS} operations sorts the to-insert items before the propagation starts. To perform sorting, these to-insert items are loaded to the shared memory for efficient data access and movement. In our generalized heap implementation, the number of to-insert items for one insert operation is limited by the size of the shared memory per thread block (no more than 1K pairs in our case). We choose parallel bitonic sorting algorithm as it can be adopted for our thread-block-level operations well.

Bitonic sorting is a comparison-based sorting algorithm. Other efficient non-comparison based GPU sorting algorithms (e.g. parallel radix sort) require types to have the same lexicographical order as integers. This not only limits the practical use of the sorting algorithms to only numeric types like $int$ or $float$ but also the sorting complexity of which is based on the size of the key (length of the data). As we mentioned before, in our parallel generalized heap, the number of the to-insert items is usually small, which means the size of the key can dominant the sorting efficiency. Parallel bitonic sorting algorithm's complexity depends on the number of input elements which makes it more suitable for our case.

\subsection{MergeAndSort Operation} \label{sec:implement:merge}

In both \textit{INS} and \textit{DEL} operations, we perform the MergeAndSort operations frequently during the heapify process. 
This can be optimized thanks to the \textbf{generalized heap property} 2 that the keys in a node are already sorted. Instead of directly sorting the keys that need to be merged, performing a MergeAndSort operation on those sorted nodes will be more time efficient.

In our parallel generalized heap implementation, the number of items in a node is small and we also load the data into the shared memory. Here we use the GPU \textit{merge-path} algorithm \cite{Oded+:SC12}, which merges two already sorted sequences. The main advantage of the \textit{merge-path} method is that it can assign workload evenly to threads. It has low-latency communication and high-bandwidth shared memory usage which our implementation can benefit from a lot. Detailed description and complexity analysis of the GPU \textit{merge-path} algorithm can be found in \cite{Oded+:SC12}.

\subsection{Optimizations} \label{sec:implement:optimization}

To improve the performance of concurrent \textit{INS} and \textit{DEL} operations, we apply the following optimizations.

\paragraph{Remove Redundant MergeAndSort Operations} MergeAndSort operation is the major overhead of heap operations. It is used frequently to make sure that the generalized heap satisfies \textbf{generalized heap property}. In our implementation, we compare the keys in the nodes and then decide if a MergeAndSort operation is necessary. When the largest keys in a node is smaller than the smallest key in the other, instead of performing the MergeAndSort operation, we simply swap the two nodes which is much more efficient. This optimization reduces the number of MergeAndSort operations within every insert and delete operation.

\paragraph{Early Stop} This optimization is similar to what we do in a conventional heap. The \textit{INS} and \textit{DEL} operations will terminate once the \textbf{generalized heap property} is satisfied. For our generalized heap, \textit{Early Stop} can happen for all operations except for \textit{top-down INS} which has to bring the to-insert items to the target node that locates at the bottom of the heap. Thus it has to traverse all levels of the generalized heap.

\paragraph{Bit-Reversal Permutation} The \textit{INS} operation needs to decide the target node and two consecutive \textit{INS} operations may select the two target nodes with the same parent. In this case, the insert path from the root node to the target nodes are highly overlapped which can increase the contention between the two \textit{INS} operations. We apply the bit-reversal permutation\cite{Hunt+:IPL96} that makes sure for any two consecutive \textit{INS} operations, the two insert paths have no common nodes except the root node. Consecutive \textit{DEL} operation also select the last batch in the heap following the bit-reversal permutation like \textit{INS} operation, but in the reversed order.

\section{Evaluation}

\subsection{Experiment Setup}
We perform our experiments on an NVIDIA TITAN X GPU with an Intel Xeon E5-2620 CPU with 2.1GHz working frequency. The TITAN X GPU has 28 streaming multi-processors (SMs) with 128 cores, for a total of 3584 cores. Every thread block has 48 KB of shared memory and 64K available registers. The maximal number of active threads is 1K per thread block and 2K per SM.

We evaluate our parallel  heap from six different perspectives:
\begin{itemize}
    \item We compare our concurrent heap implementation with a sequential CPU heap and a previous GPU Heap \cite{He+:HiPC12}. We use input workloads with different heap access patterns.
    \item We vary the number of the number of thread blocks to evaluate the impact of contention levels and the scalabiltiy. The number of threads affects the number of active \textit{ins} or \textit{del} operations. 
    \item We perform sensitivity analysis with respect to heap node capacity K, the type of operation \textit{ins} or \textit{del}, and thread block size.
    \item We evaluate how inserting partial batches would influence the heap performance by varying the percentage of partial batch operations. 
    \item We test the concurrent \textit{ins} and \textit{del} performance under different heap utilization which means the heap is initialized with different number of pre-inserted keys.
    \item We apply our parallel heap to two real world applications which are single source shortest path (sssp) and 0/1 knapsack problem.
\end{itemize}

\subsection{Concurrent Heap v.s. Sequential Heap}

We use the GPU parallel heap implementation by He, Deo, and Prasad \cite{He+:HiPC12} as our GPU baseline. We refer to this implementation as \textit{parallel synchronous heap} or in short, \textit{P-Sync Heap}. We use the C++ STL priority queue library as the sequential CPU heap which we refer to as the \textit{STL Heap}. Note that \textit{INS} operation of the \textit{P-Sync Heap} is \textit{top-down}, while it is \textit{bottom-up} for \textit{STL Heap}.

We evaluate the performance of inserting 512M keys into an empty heap and then deleting all these 512M keys from the heap. We use different types of input keys which are \textcircled{1} randomized 32-bit int keys \textcircled{2} 32-bit int keys sorted in ascending order \textcircled{3} 32-bit int keys sorted in descending order. The results are shown in Table \ref{tab:vscpu}.

Our concurrent heap has an average 16.59X speedup compared to the \textit{STL Heap} and 2.03X speedup compared to \textit{P-Sync Heap}. We observe the best performance when the input keys are sorted in ascending order in all cases. For \textit{STL Heap} and \textit{BU-INS/TD-DEL Heap}, it is because the \textit{INS} operations only need to place the insert keys at the target node without traversing the entire heap. For \textit{P-Sync Heap} and \textit{TD-INS/TD-DEL Heap}, although \textit{INS} operations start from the root node, we still gain the benefit of the keys sorted in ascending order as it avoids the overhead of unnecessary merging operations along the insert path. 

Both \textit{TD-INS/TD-DEL Heap} and \textit{BU-INS/TD-DEL Heap} are faster than \textit{P-Sync Heap}. This is because we can support concurrent \textit{INS} or \textit{DEL} at the same level of the heap while for \textit{P-Sync Heap}, only one \textit{INS} or \textit{DEL} can work on the same level which exhibits with a lower inter-node parallelism. In later experiments, we will use randomized 32-bit keys for all \textit{INS} and \textit{DEL} operations performance evaluation except for the real world applications.

\vspace{.5cm}


    \begin{minipage}{.54\columnwidth}
        
        {\scriptsize
        \centering
        \captionof{table}{Heap performance on CPU v.s. GPU }
        \label{tab:vscpu}
        \centering
        \renewcommand\arraystretch{1.5}
        \begin{tabular}{|c||c|c|c|}
            \hline
            Method & \textbf{random} & \textbf{descend} & \textbf{ascend} \\
            \hline
            \textbf{STL Heap} & 1,959,550 & 1,898,015 & 1,214,906 \\
            \textbf{P-sync Heap} & 209,648 & 205,761 & 201,66 \\
            \textbf{TD-INS/TD-DEL Heap} & 112,090 & 100,163 & 99,082 \\
            \textbf{BU-INS/TD-DEL Heap} & 104,417 & 97,593 & 96,247 \\
            \hline
        \end{tabular}
        \caption*{Thread block number: 128, thread block size: 512, \\ K: 1024, time unit: milli-second (ms), keys: 512M}
        }
    \end{minipage}
    \begin{minipage}{.4\columnwidth}
       \centering
       \captionsetup{justification=centering}
       \includegraphics[width=.9\columnwidth]{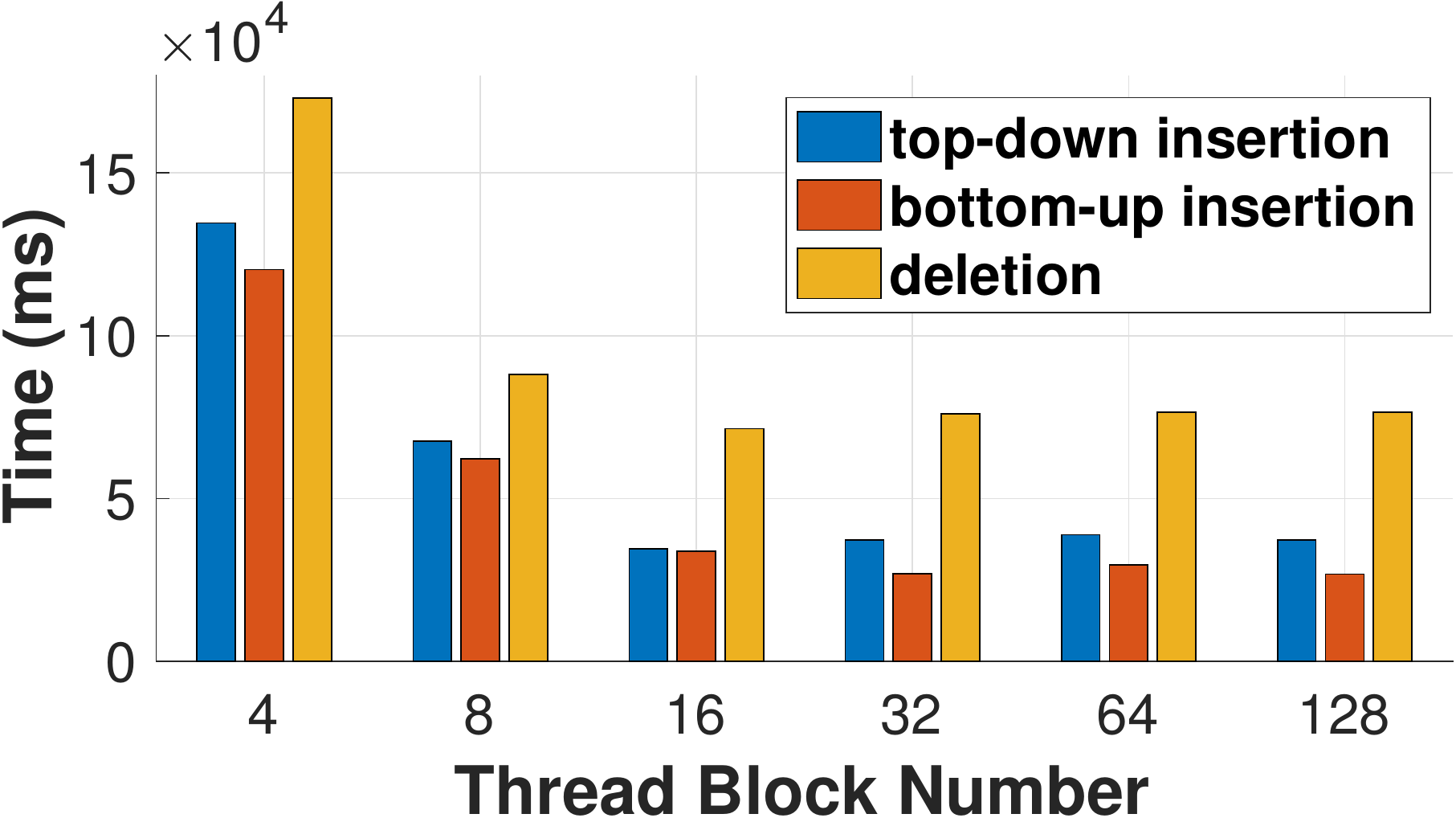}
       \captionof{figure}{Heap performance w.r.t thread block numbers}
       \label{fig:tbnum}
    \end{minipage}

\subsection{Impact of Thread Number}

We evaluate the performance of \textit{top-down}  insertion update, \textit{bottom-up} insertion update, and \textit{deletion} update respectively by varying the number of thread blocks. The more thread blocks, the more concurrency we can gain, and also the more contention on the heap. We fix all other parameters, with thread block size = 512 and batch size = 1024. We test the performance of inserting 512M random 32-bit keys into an empty heap for insertion-only experiments, and deleting 512M keys from a full heap for deletion-only experiemnt.\footnote{\noindent In this case, a fully heap is defined as a heap that has 512M keys, regardless of the batch size.} We show the results in Fig. \ref{fig:tbnum}.

The performance of both \textit{ins} and \textit{del} operations become better when the number of thread blocks is increased since more concurrency can be obtained. However, the benefit from concurrency is restricted when the thread block number keeps increasing since more thread blocks also means more contention on the heap nodes. 

The \textit{del} operation always needs much more time than the \textit{ins} operations especially when the thread block number is large which with an average 2.6X slow down. This is because \textit{del} operation needs to hold both parent node and its two child nodes and performing at most two MergeAndSort operations when updating keys on each level of the heap while \textit{ins} operation needs only one.

When comparing \textit{top-down ins} with \textit{bottom-up ins} operations. We see that \textit{bottom-up ins} always has a better performance since it causes less contention on the root node of the heap and it may not need to traverse all the nodes on the insert path (the heap property may be satisfied in the middle). 

\subsection{Impact of Heap Node Capacity}

Fig. \ref{fig:nodeS} shows how \textit{ins} and \textit{del} performance is influenced by heap node capacity. Due to the limits in shared memory size per thread block, the maximum batch size we used is 1K. Also, the maximum number of thread block size depends on the batch size, since it does not make sense to have more than one thread handling one key in MergeAndSort operations. We test the performance by inserting 512M keys to an empty heap and deleting 512M keys from a full heap.

\begin{figure}[htb]
\centering
\begin{subfigure}{.32\columnwidth}
   \centering
   \includegraphics[width=.9\columnwidth]{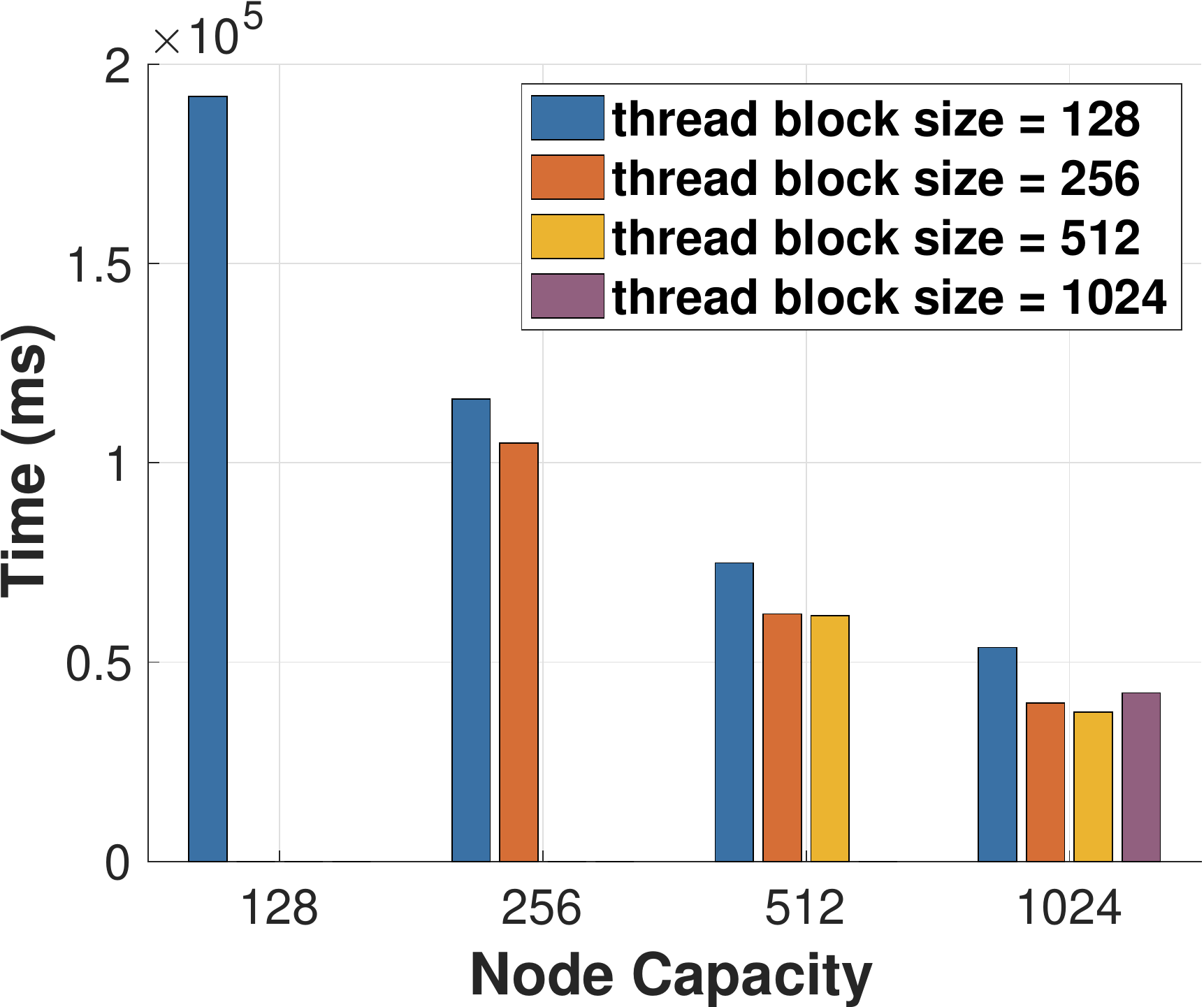}
   \captionof{figure}{top-down ins operations}
   \label{fig:nodeStd}
\end{subfigure}
\begin{subfigure}{.32\columnwidth}
   \centering
   \includegraphics[width=.9\columnwidth]{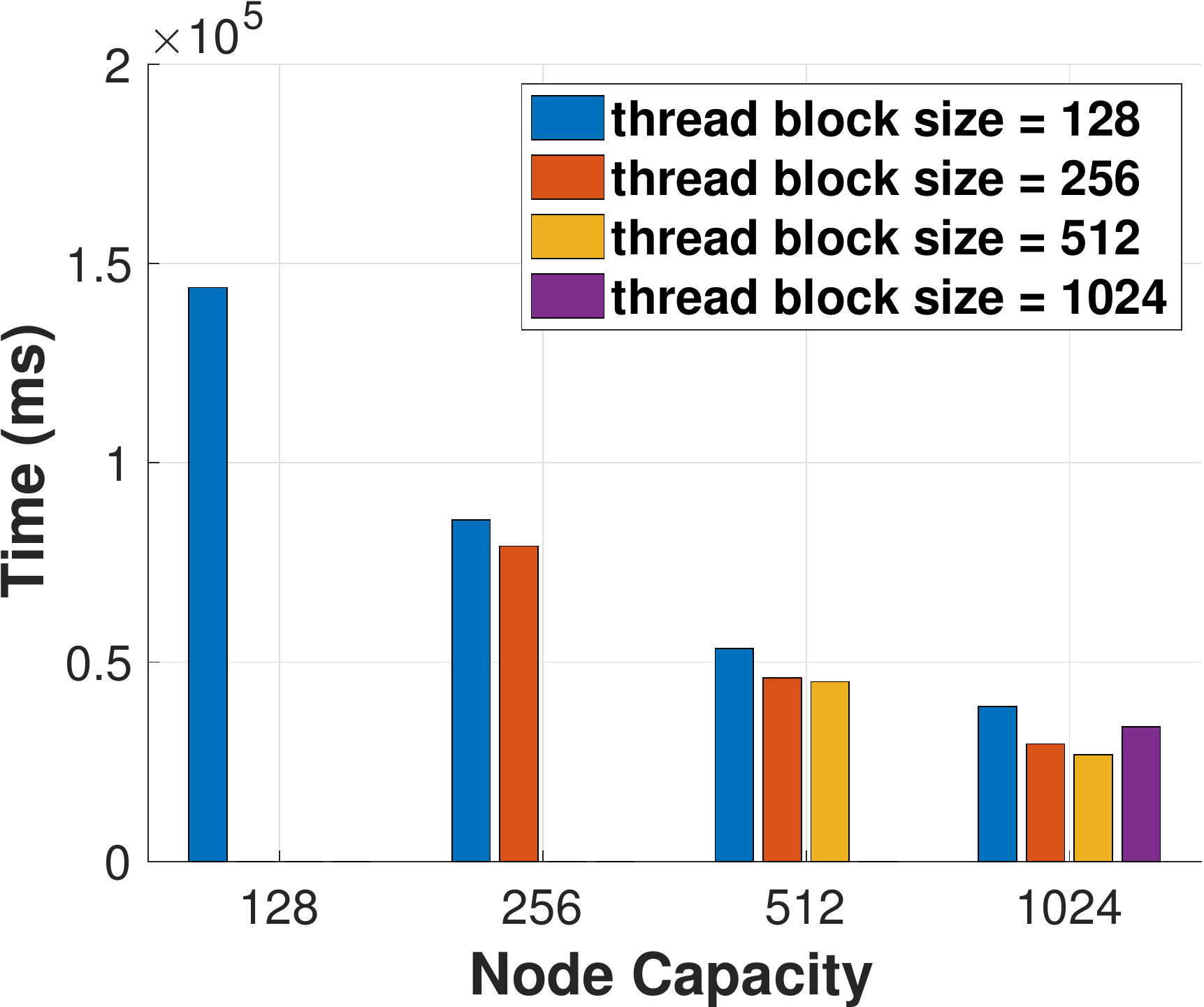}
   \captionof{figure}{bottom-up ins operations}
   \label{fig:nodeSbu}
\end{subfigure}
\begin{subfigure}{.32\columnwidth}
   \centering
   \includegraphics[width=.9\columnwidth]{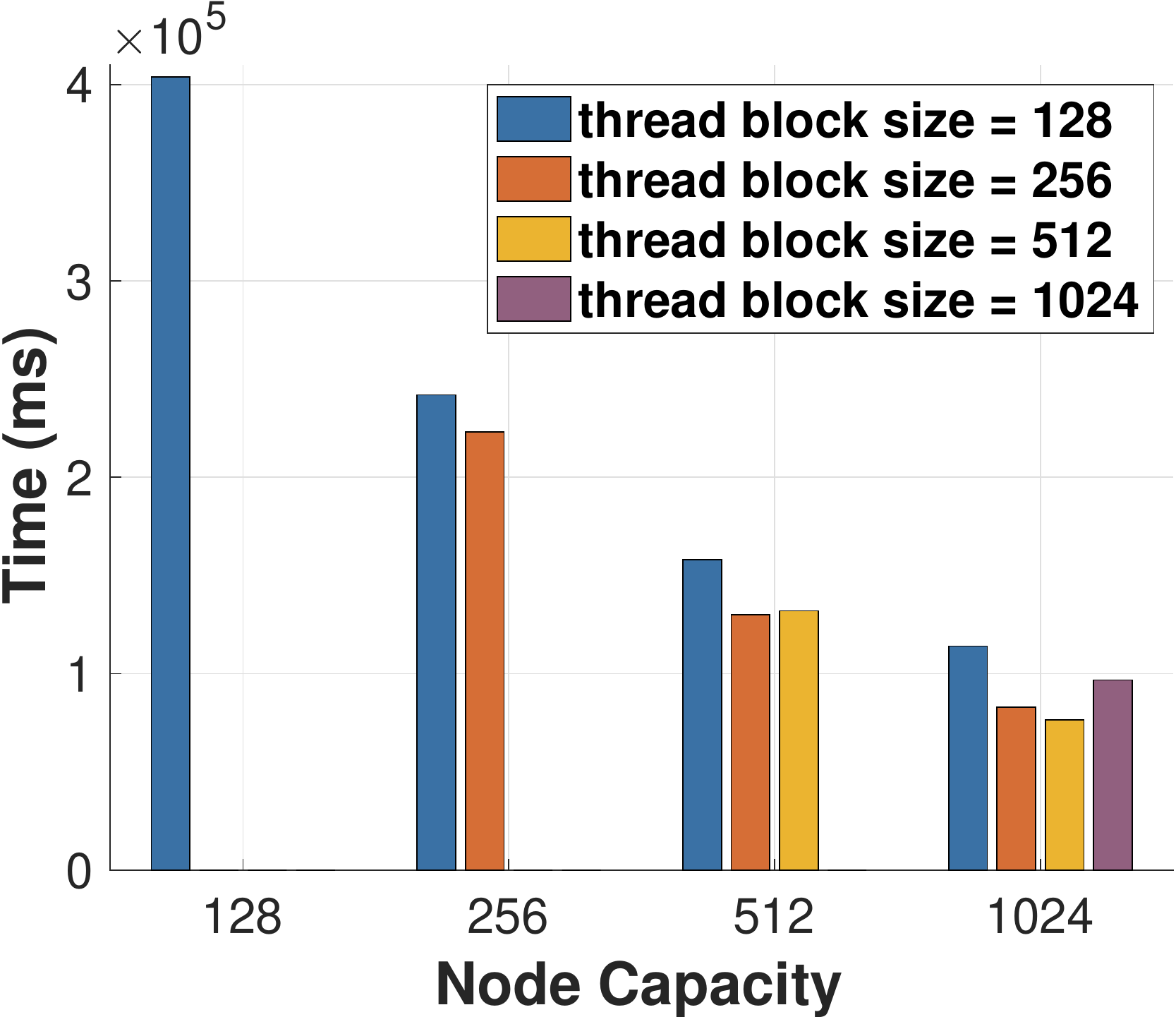}
   \captionof{figure}{del operations}
   \label{fig:nodeSdel}
\end{subfigure}
\caption{Performance of Inserting and Deleting 512M Keys w.r.t Node Capacity and Thread Block Size}
\label{fig:nodeS}
\end{figure}

When thread block size is the same, for both \textit{ins} and \textit{del}, we can observe that the performance becomes better when we use a larger node capacity. Using a larger node capacity means that with the same number of keys, the depth of the heap is reduced. If the node capacity is doubled then the level of the heap is reduced by one which leads to fewer MergeAndSort operations and tree walks. Also a larger node capacity can provide more intra-node parallelism.

In Fig. \ref{fig:nodeS}, it also shows that it is not always good to increase the thread block size because large thread block size can increase the overhead of synchronization within a thread block. Among all these configurations, we choose one with thread block size 512 and batch size 1024 for later experiments since it has the best performance for both \textit{ins} and \textit{del} operations.

\subsection{Impact of Initial Heap Utilization}
We control the initial heap size by pre-inserting a certain number of keys, for instance, to achieve a initial 10-level heap, we need to insert $batchSize * 2^{10}$ keys. In this experiment, every thread performs one \textit{ins} and one \textit{del}, which we call an \textit{ins-del} pair. Since the number of thread blocks is fixed and at most such number of \textit{ins} could happens at the same time, so the heap level is also fixed only if the initial heap level is higher than a certain number. In our experiment, we use 128 thread blocks and each thread block will perform 2K \textit{ins-del} pairs with a total 256K pairs.

In Fig. \ref{fig:util}, we show the heap performance with respect to different initial heap size from a 6-level heap with 64K items  to a 18-level heap with 256M items. We can observe that when the initial heap utilization is increased, these \textit{ins-del} pairs need more time to finish. Both \textit{ins} and \textit{del} may traverse more levels of the heap and perform more MergeAndSort operations. Operations on \textit{BU-Heap} have a better performance since \textit{bottom-up ins} still has the benefit of stopping tree traversals earlier.

\begin{figure}[htb]
\centering
\begin{minipage}{.45\columnwidth}
   \centering
   \includegraphics[width=.7\columnwidth]{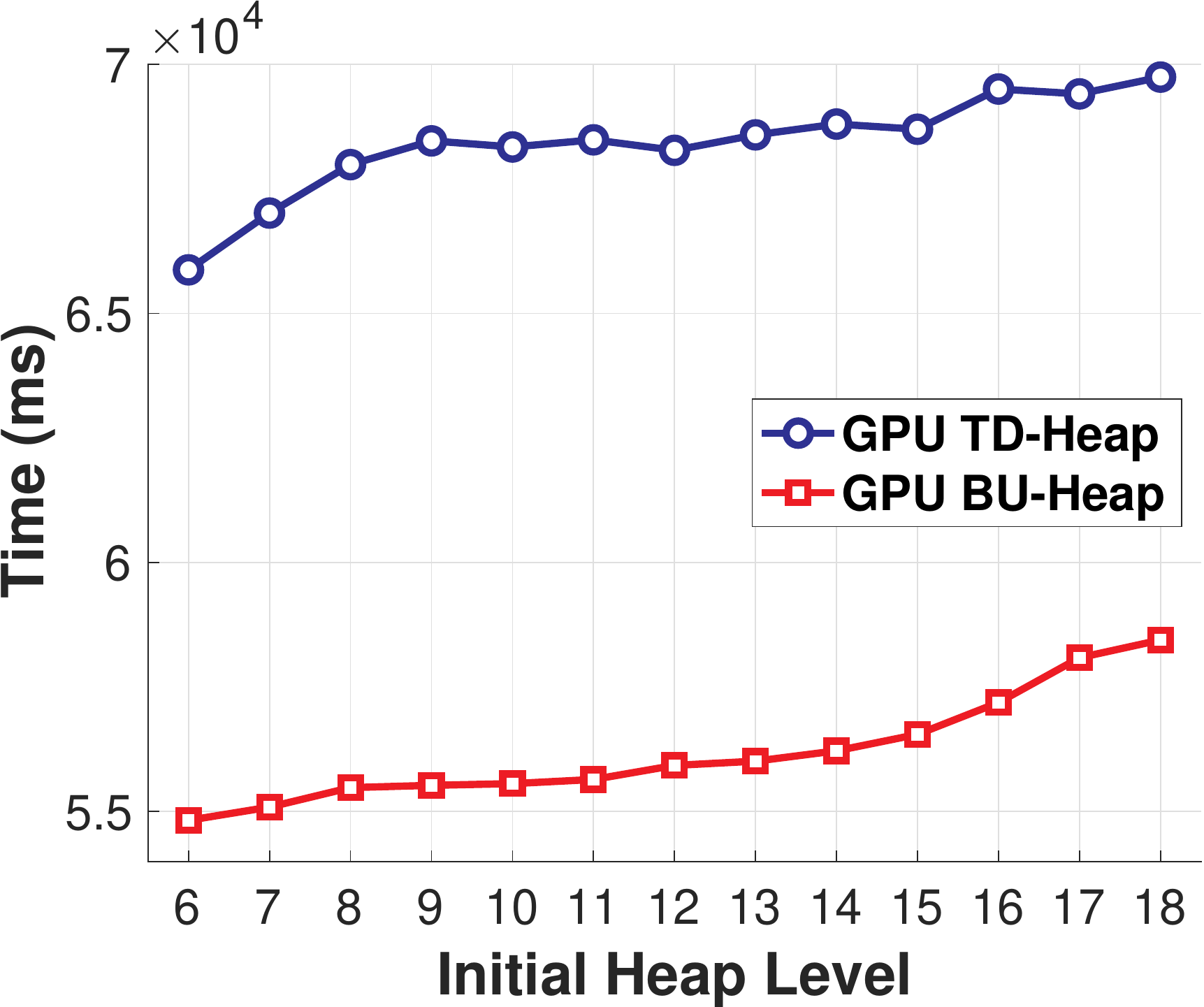}
   \caption{Performance w.r.t Initial Heap Size}
   \label{fig:util}
\end{minipage}
\begin{minipage}{.45\columnwidth}
   \centering
   \includegraphics[width=.7\columnwidth]{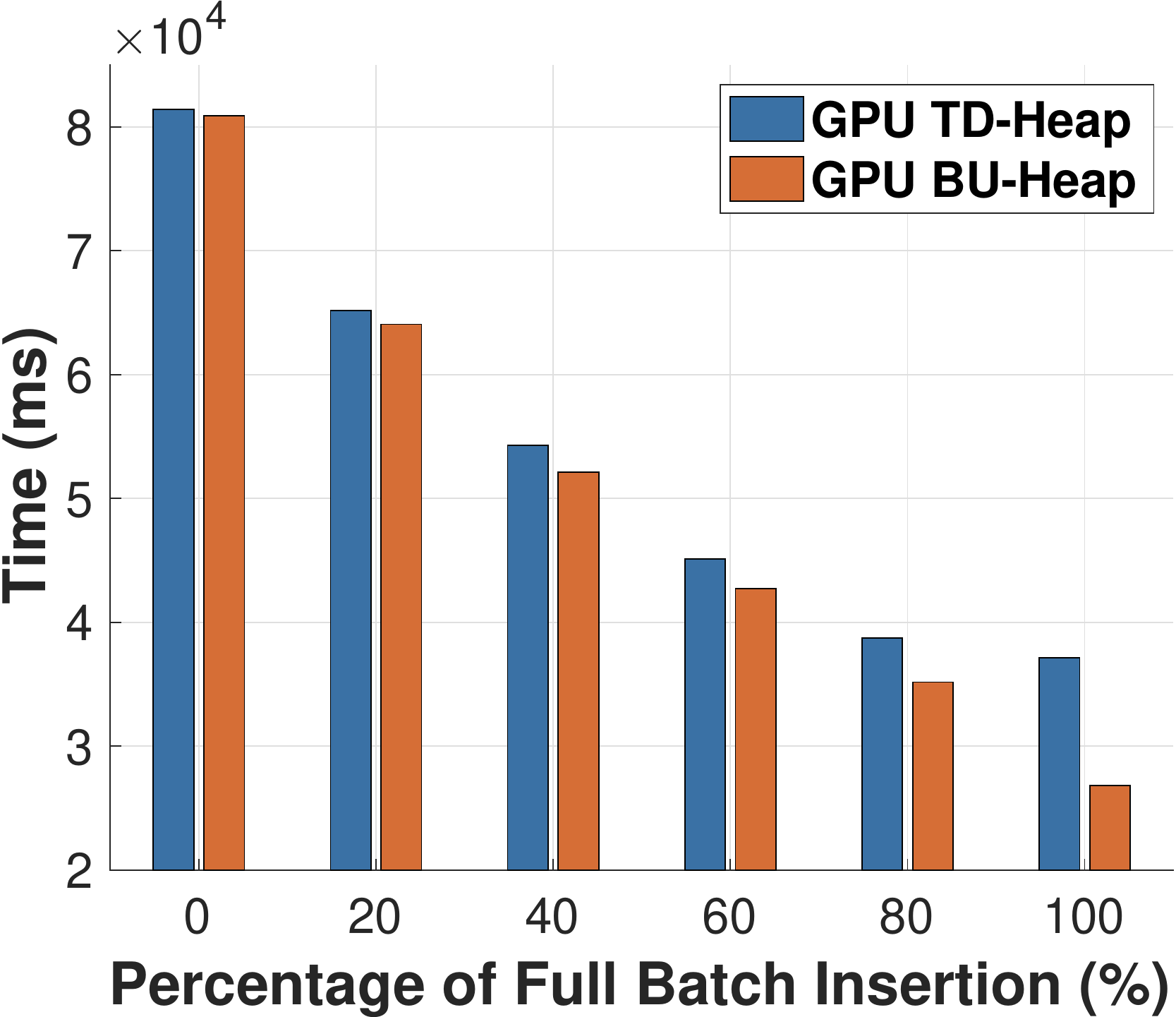}
   \caption{Performance w.r.t Partial Buffer}
   \label{fig:partial}
\end{minipage}
\end{figure}

\subsection{Impact of Partial Buffer and Partial Batch Insertion}

We evaluate how partial batch updates influence the heap performance. We test the performance by inserting 512M items into an empty heap. We control the percentage of full batch insertions and let rest insertions be randomly generated partial batches. The results are shown in Fig. \ref{fig:partial}.
As we can see, with the increase in the percentage of full batch insertions, the performance of becomes better. This is potentially because more threads are needed to insert the same number of keys, and it also cause more contention on the lock that protects the root node since inserting a partial batch will always require to lock the root node in both BU-INS/TD-DEL heap and TD-INS/TD-DEL heap. The implication is that, although partial batch is supported in the heap implementation, it would be good to avoid using partial batch insertions in real workloads if the total number of inserted keys is the same, since the overall performance difference could be up to 4X.

\subsection{Concurrent Heap with Real World Applications}
We apply our concurrent generalized heap to two real world applications: the single source shortest path (SSSP) algorithm and the 0/1 knapsack problem. Both applications can take the advantage of our concurrent heap by processing items with higher priority first. The purpose of this section is to shed light on the potential of incorporating our concurrent heap with many-core accelerators to solve real world applications. Further optimization to our concurrent heap with application-based asynchronous updates for insertion and deletion is possible, but we will leave it as our future work.

\subsubsection{SSSP with Concurrent Heap}
Gunrock\cite{Wang+:gunrock} is a well known parallel iterative graph processing library on the GPU. It applies a compute-advance model to solve for applications like the SSSP such that at each iteration, nodes are classified as active nodes and inactive nodes by checking if their new distance bring an update to existing distance, after which only active nodes would be explored in the next iteration since inactive nodes will not bring updates to the final result.

Our implementation of the parallel SSSP algorithm is similar to Gunrock's. At each iteration, we use our heap to store those active nodes with their current distance as the key. In this way, those nodes with the shortest distance would be explored first in the next iteration. As a result, our implementation tends to reduce the overhead of unnecessary updates and the number of active nodes being explored.

We use gunrock\cite{Wang+:gunrock} as the baseline for comparison and we set a threshold $N$ such that only when the number of active nodes is larger than $N$, will we incorporate the algorithm with our concurrent heap. We use N=10K in our experiments. We choose 14 different real world graphs and describe the properties of these graphs in Table \ref{tab:ssspgraphinfo}.

\begin{table}[htb]
\caption{Graph Information}
\label{tab:ssspgraphinfo}
{\footnotesize
\begin{tabular}{|c|c|c|l|}
\hline
\textbf{Graph Name} & \textbf{\# Nodes} & \textbf{\# Edges} & \textbf{Type of Graph} \\ \hline
\textbf{AS365} & 3,799,275 & 22,736,152 & 2D FE triangular meshes \\
\textbf{bundle\_adj} & 513,351 & 20,721,402 & Bundle adjustment problem \\
\textbf{coPapersDBLP} & 540,486 & 30,491,458 & DIMACS10 set \\
\textbf{delaunay\_n22} & 4,194,304 & 25,165,738 & DIMACS10 set \\
\textbf{hollywood-2009} & 1,139,905 & 115,031,232 & Graph of movie actors \\
\textbf{Hook\_1498} & 1,498,023 & 62,415,468 & 3D mechanical problem  \\
\textbf{kron\_g500\_logn20} & 1,048,576 & 89,240,544 & DIMACS10 set \\
\textbf{Stanford\_Berkeley} & 685,230 & 7,600,595 & Berkeley-Stanford web graph \\
\textbf{Long\_Coup\_dt0} & 1,470,152 & 88,559,144 & Coupled consolidation problem \\
\textbf{M6} & 3,501,776 & 21,003,872 & 2D FE triangular meshes \\
\textbf{NLR} & 4,163,763 & 24,975,952 & 2D FE triangular meshes \\
\textbf{rgg\_n\_2\_20\_s0} & 1,048,576 & 13,783,240 & Undirected Random Graph \\
\textbf{Serena} & 1,391,349 & 65,923,050 & Structural Problem \\ \hline
\end{tabular}
}
\end{table}

\begin{table}[htb]
\caption{Parallel Single Source Shortest Path Performance}
\label{tab:sssp}
{\footnotesize
\begin{tabular}{|c|c|c||c|c|c|c||c|} 
\hline
{} & \multicolumn{2}{c||}{\textbf{Baseline}} & \multicolumn{4}{c||}{\textbf{Heap Based SSSP w/ N=10K}} & {} \\
\hline
\multirow{2}{*}{\textbf{Graphs}} & \textbf{Computate}  & \textbf{\# Nodes} &  \textbf{Heap}  & \textbf{Compute}     & \textbf{Total}    & \textbf{\# Nodes} & \multirow{2}{*}{\textbf{Speedup}} \\
& \textbf{Time(ms)} & \textbf{Visited} &  \textbf{Time(ms)} & \textbf{Time(ms)} & \textbf{Time(ms)} & \textbf{Visited} &  \\
\hline
\textbf{AS365} & 654.44 & 19,664,769 & 193.43 & 422.12 & 615.55 & 11,843,368 & 1.06 \\
\textbf{bundle\_adj} & 144.54 & 903,097 & 11 & 126.48 & 137.48 & 877,675 & 1.05 \\
\textbf{coPapersDBLP} & 46.13 & 981,876 & 12.52 & 25.46 & 37.98 & 710,794 & 1.21 \\
\textbf{delaunay\_n22} & 1125.93 & 29,832,633 & 283.04 & 647.61 & 930.65 & 18,607,590 & 1.21 \\
\textbf{hollywood-2009} & 100.17 & 2,007,447 & 14.22 & 74.35 & 88.58 & 1,370,459 & 1.13 \\
\textbf{Hook\_1498} & 233.76 & 2,786,271 & 31.39 & 182.52 & 213.91 & 1,756,776 & 1.09 \\
\textbf{kron\_g500\_logn20} & 117.79 & 2,590,570 & 28.72 & 73.1 & 101.82 & 860,552 & 1.16 \\
\textbf{Long\_Coup\_dt0} & 190.06 & 2,699,927 & 43.46 & 116.77 & 160.23 & 1,571,565 & 1.19 \\
\textbf{Stanford\_Berkeley} & 55.3 & 530,294 & 5.17 & 52.54 & 57.71 & 462,860 & 0.96 \\
\textbf{M6} & 677.95 & 20,972,903 & 161.88 & 472.67 & 634.56 & 16,126,697 & 1.07 \\
\textbf{NLR} & 894.71 & 29,583,224 & 318.35 & 439.61 & 757.96 & 16,123,803 & 1.18 \\
\textbf{rgg\_n\_2\_20\_s0} & 920.27 & 7,112,685 & 46.44 & 701.78 & 748.22 & 5,871,411 & 1.23 \\
\textbf{Serena} & 124.16 & 2,594,858 & 27.98 & 84.94 & 112.93 & 1,498,836 & 1.1 \\ \hline
\end{tabular}
}
\end{table}


We show the result of parallel SSSP in Table \ref{tab:sssp}. For all the graphs we tested, we have an average of 1.13X overall speedup with the threshold $N$ = 10K compared to the baseline. The heap based sssp does not perform well on graph \textit{Stanford\_Berkeley} since it is a small graph, which means that the number of active nodes at each level is not large enough for the improvement brought by the heap to cover the overhead of it's own operations.

In Table \ref{tab:sssp}, we also list the time in milliseconds for sssp computation and heap operations separately. The computation time is the SSSP computation time, which includes processing times for node expanding, edge filtering and distance updating. The heap time is the time spent on the heap operations. The number of nodes visited represents the total number of times that nodes being explored during SSSP computation. With incorporation of our heap, the number of node visits is reduced remarkably compared to the baseline, which directly leads to the reduced sssp computation time.

\subsubsection{0/1 Knapsack with Concurrent Heap}
The knapsack problem appears in real-world decision-making processes in a wide variety of fields. It defines as follows: given weights and benefits for some items and a knapsack with a limited capacity W, determine the maximum total benefit can be obtained in the knapsack. The 0/1 knapsack problem is a branch of the knapsack problem where one must either put the complete item in the knapsack or don't pick it at all.

Branch and bound is an algorithm design paradigm, which is usually used for solving combinatorial optimization problems such as the 0/1 knapsack problem. The solution to the 0/1 knapsack problem can be expressed as a path in a binary decision tree where each level in the tree represents we either pick or do not pick an item. Thus, with n items, there are $2^{n}$ possible solutions. Instead of blindly checking for every possible solution for the maximum benefit under certain capacity, we can prune the search space by comparing the bound (the best possible benefit we could gain if we choose this node) of a node with the current maximum benefit to see if it is worth continue exploring.

A simple sequential implementation of such algorithm is to enqueue to-explore nodes into a priority queue with their current benefits as the key values and always choose to explore nodes with largest benefit first. On one hand, it is a greedy approach that would give optimal solution despite that it might encounter several local optimal before get to the global optima. On the other hand, if nodes with larger benefit are explored first, we can skip exploring certain nodes with a bound that is smaller than the max current benefit. We implement a parallel GPU knapsack algorithm based on the sequential version with our concurrent heap to show that the incorporation of our concurrent heap accelerates knapsack computation.

Since parallelizing node exploration might result in unnecessary growth in heap size, we also apply a technique which filters invalid nodes when heap size is larger than a certain threshold before the algorithm continues to explore more nodes. We name this version as knapsack with garbage collection (GC). 

\begin{table}[htb]
\caption{Datasets for 0/1 Knapsack Problem}
\label{tab:knapsackData}
{\footnotesize
\begin{tabular}{|c|c|c|c|c|}
\hline
\textbf{Dataset} & \textbf{Type} & \textbf{Size} & \textbf{Range} \\
\hline
\textbf{ks\_sc\_700\_18k} & Strongly Correlated & 700 & 18000 \\
\textbf{ks\_sc\_800\_18k} & Strongly Correlated & 800 & 18000 \\
\textbf{ks\_sc\_200\_7k} & Strongly Correlated & 200 & 7000 \\
\textbf{ks\_asc\_750\_16k} & Almost Strongly Correlated & 750 & 16000 \\
\textbf{ks\_asc\_1300\_6k} & Almost Strongly Correlated & 1300 & 6000 \\
\textbf{ks\_asc\_500\_7k} & Almost Strongly Correlated & 500 & 7000 \\
\textbf{ks\_esc\_900\_18k} & Even-odd Strongly Correlated & 900 & 18000\\
\textbf{ks\_esc\_1200\_13k} & Even-odd Strongly Correlated & 1200 & 13000 \\
\textbf{ks\_esc\_400\_8k} & Even-odd Strongly Correlated & 400 & 8000 \\
\textbf{ks\_ss\_100\_18k} & Subset Sum & 100 & 18000 \\
\textbf{ks\_ss\_1250\_12k} & Subset Sum & 1250 & 12000 \\
\textbf{ks\_ss\_1300\_14k} & Subset Sum & 1300 & 14000 \\
\bottomrule
\end{tabular}
}
\end{table}

\begin{table}[htb]
\caption{0/1 Knapsack Problem with  heap}
\label{tab:knapsackResult}
{\footnotesize
\begin{tabular}{|c|c|c||c|c|c||c|c|c|}
\hline
\multirow{2}{*}{} & 
\multicolumn{2}{c||}{\textbf{CPU w/}} & \multicolumn{3}{c||}{\textbf{GPU w/ concurrent}} &
\multicolumn{3}{c|}{\textbf{GPU w/ concurrent}} \\
{} & \multicolumn{2}{c||}{\textbf{Priority Queue}} & 
\multicolumn{3}{c||}{\textbf{ heap}} & 
\multicolumn{3}{c|}{\textbf{ heap and GC.}}\\
\hline
\multirow{2}{*}{\textbf{Dataset}}
 & \textbf{Time}  & \textbf{\# Nodes } & \textbf{Time}  & \textbf{\# Nodes} & \multirow{2}{*}{\textbf{SpeedUp}} & \textbf{Time}  & \textbf{\# Nodes} & 
\multirow{2}{*}{\textbf{SpeedUp}} \\
 {} & {\textbf{(ms)}} & \textbf{Explored} & {\textbf{(ms)}} & \textbf{Explored} & {}  & {\textbf{(ms)}} & \textbf{Explored} & {}\\
 
\hline
\textbf{ks\_sc\_700\_18k} & 825.70 & 782802 & 670.57 & 813089 & 1.23 & 595.58 & 810717 &1.39 \\
\textbf{ks\_sc\_800\_18k} & 977.49 & 923255 & 757.06 & 955374 & 1.29 & 708.02 & 956514 & 1.38 \\
\textbf{ks\_sc\_200\_7k} & 202.40 & 243106 & 199.76 & 249373 & 1.01 & 205.27 & 249935 & 0.99\\
\textbf{ks\_asc\_750\_16k} & 757.17 & 709267 & 722.90 & 389249 & 1.05 & 566.17 & 445231 & 1.34 \\
\textbf{ks\_asc\_1300\_6k} & 5239.97 & 4934552 & 5118.03 & 2832737 & 1.02 & 4115.55 & 2404241 & 1.27 \\
\textbf{ks\_asc\_500\_7k} & 502.37 & 475402 & 549.10 & 296824 & 0.91 & 499.03 & 295951 & 1.01 \\
\textbf{ks\_esc\_900\_18k} & 1128.4 & 1080182 & 848.65 & 1123920 & 1.33 & 796.58 & 1124880 & 1.42\\
\textbf{ks\_esc\_1200\_13k} & 2013.06 & 1950260 & 2066.64	& 2002002 & 0.97 & 1357.75	& 1770747 & 1.48\\
\textbf{ks\_esc\_400\_8k} & 355.25 & 399185 & 346.53 & 418925 & 1.03 & 348.52 & 421504 & 1.02\\
\textbf{ks\_ss\_100\_18k} & 42.38	& 54278 &	3.55 &	55 &	3.48 &	11.92 &	55 &	12.19\\
\textbf{ks\_ss\_1250\_12k} & 20.27 & 23886 & 4.30 & 94 & 4.12 & 4.72 & 94 & 4.92\\
\textbf{ks\_ss\_1300\_14k} & 25.02 & 25305 & 19.64 & 9452 & 1.27 & 18.01 &	8528 & 1.39\\
\bottomrule
\end{tabular}
}
\end{table}
In \cite{martello+MS99}, S. Martello et al. defined and tested with several types of instances of knapsack problems. Using the same data generation tool, we generated 12 knapsack datasets to demonstrate the potential of our concurrent heap. We describe the properties of these datasets in Table \ref{tab:knapsackData}.

We compared the running time in milliseconds and the number of explored nodes between sequential and GPU knapsack in Table \ref{tab:knapsackResult}. For all the datasets we tested we obtain an average overall speedup of 2.31X for GPU knapsack and 2.48X for GPU knapsack with garbage collection.

We find that our GPU knapsack algorithms with concurrent heap performs particularly well with Subset-sum(ss) instances, with a maximum speedup up to 12.19X compared to sequential version. Also, the number of nodes explored for GPU knapsack is significantly smaller than that of sequential knapsack. Because of the greedy property of the branch and bound algorithm, it does not guarantee the path its exploring will lead to a global optimal solution, it is possible, especially with the Subset-sum instances where the benefit of an item is equal to the weight of it. On the other hand, parallelizing the branch and bound algorithm with our concurrent heap allows it to solve for a large amount of potential solutions that are prioritized by their current benefit simultaneously, which can lead to a faster convergence of the global optimal solution.

Theoretically, parallelizing node exploration in branch and bound knapsack problem may cause an exponential growth in the queue size since it performs parallel explorations for nodes in a binary decision tree. However, according to our experiments, we find that the GPU knapsack sometimes results in less node exploration because while nodes with higher benefit are explored earlier than other nodes, there are chances where the current max benefit converges fast enough so that the nodes with lower priority quickly become invalid for exploration since it's bound become smaller than the max benefit, which leads to a reduction in exploring time and correspondingly an increase of overall performance.

\section{Related Work}

\noindent {\bf \emph{CPU Parallel Heap Algorithms}}
The most
popular CPU approach \cite{Nageshwara+:TC88,Rassul:IPDPS90,Hunt+:IPL96,Nir+:PODC99,Sushil+:IPDPS95} to gain parallelism for parallel heap is by supporting concurrent insert or/and delete operations.  Rao and Kumar\cite{Nageshwara+:TC88} proposed a scheme that used multiple mutual exclusion locks to protect each node in the heap. They also proceeded insertions in the top-down manner to avoid deadlock while insertions for the conventional heap follow a bottom-up manner. 
Rassul\cite{Rassul:IPDPS90} proposed LR-algorithm which was an extension to Rao and Kumar's method that scatter the accesses of different operations to reduce the contention.
Hunt and others\cite{Hunt+:IPL96} present a lock-based heap algorithm that supports insertion and deletion in an opposite directions.
Deo and Prasad\cite{Deo+:JS92} increased the
node capacity. However, their algorithm does not support concurrent insertions/deletions. 

All these parallel heap algorithms on CPUs cannot be applied to GPUs directly because of the unique SIMT execution model employed by moder GPUs. For parallel algorithms on GPUs, the optimization for \textit{thread divergence}, \textit{memory coalescing} and \textit{synchronization} need to be taken into account.

\noindent {\bf \emph{GPU Parallel Heap Algorithms}} Parallel Heap algorithms are less studied on GPUs. He \cite{He+:HiPC12} introduced a parallel heap algorithms for many-fore architectures like GPUs. Their algorithm exploited the parallelism of the parallel heap by increasing the node capacity, like the idea in \cite{Deo+:JS92}, and pipelining the insert and delete operations. However, their approach did not exploit the parallelism for concurrent operations at the same level of the heap. Also they need a global barrier to synchronize all threads after the insert or delete updates at every levels which brought extra heavy overhead.

\section{Conclusion}

This work proposes a concurrent heap implementation that is friendly to many-core accelerators. We develop a generalized heap and support both intra-node and inter-node parallelism. We also prove that our two heap implementations are linearizable. We evaluate our concurrent heap thoroughly and show a maximum 19.49X speedup compared to the sequential CPU implementation and 2.11X speedup compared with the existing GPU implementation \cite{He+:HiPC12}.

\begin{acks}                            
  This material is based upon work supported by the
  \grantsponsor{GS100000001}{National Science
    Foundation}{http://dx.doi.org/10.13039/100000001} under Grant
  No.~\grantnum{GS100000001}{nnnnnnn} and Grant
  No.~\grantnum{GS100000001}{mmmmmmm}.  Any opinions, findings, and
  conclusions or recommendations expressed in this material are those
  of the author and do not necessarily reflect the views of the
  National Science Foundation.
\end{acks}





\end{document}